\documentclass[lettersize,journal]{IEEEtran}
\usepackage{amsmath,amsfonts}
\usepackage{algorithmic}
\usepackage{algorithm}
\usepackage{array}
\usepackage{textcomp}
\usepackage{stfloats}
\usepackage{url}
\usepackage{verbatim}
\usepackage{cite}

\usepackage{tikz}
\usepackage{amsmath}

\usepackage{filecontents}

\usepackage{graphicx}
\usepackage{subfigure}
\usepackage{amsmath}
\usepackage{amsthm}
\usepackage{mathrsfs}
\usepackage[vlined,ruled,linesnumbered,algo2e]{algorithm2e}
\usepackage{xcolor}
\usepackage{colortbl}
\usepackage{multicol}
\usepackage{multirow}
\usepackage{diagbox}
\usepackage{bbm}
\usepackage{makecell}
\usepackage{enumitem}
\usepackage{booktabs}
\usepackage{algorithm}
\usepackage{algorithmic}
\usepackage{hyperref}
\usepackage{amsfonts}
\usepackage{newtxmath}
\usepackage{cite}
\usepackage[numbers]{natbib}

\newtheorem{thm}{Theorem}

\newtheorem{lem}{Lemma}

\begin{document}

\title{Secure Forgetting: A Framework for Privacy-Driven Unlearning \\ in Large Language Model (LLM)-Based Agents}

\author{Dayong Ye, Tianqing Zhu$^*$, Congcong Zhu, Feng He, Qi He, Wang Shang, Bo Liu, and Wanlei Zhou
\thanks{$^*$Tianqing Zhu is the corresponding author.}
}




\maketitle

\begin{abstract}
Large language model (LLM)-based agents have recently gained considerable attention due to the powerful reasoning capabilities of LLMs. Existing research predominantly focuses on enhancing the task performance of these agents in diverse scenarios. However, as LLM-based agents become increasingly integrated into real-world applications, significant concerns emerge regarding their accumulation of sensitive or outdated knowledge. Addressing these concerns requires the development of mechanisms that allow agents to selectively forget previously learned knowledge, giving rise to a new term \emph{LLM-based agent unlearning}.
This paper initiates research on unlearning in LLM-based agents. Specifically, we propose a novel and comprehensive framework that categorizes unlearning scenarios into three contexts: state unlearning (forgetting specific states or items), trajectory unlearning (forgetting sequences of actions) and environment unlearning (forgetting entire environments or categories of tasks). Within this framework, we introduce a natural language-based unlearning method that trains a conversion model to transform high-level unlearning requests into actionable unlearning prompts, guiding agents through a controlled forgetting process. Moreover, to evaluate the robustness of the proposed framework, we introduce an unlearning inference adversary capable of crafting prompts, querying agents, and observing their behaviors in an attempt to infer the forgotten knowledge. Experimental results show that our approach effectively enables agents to forget targeted knowledge while preserving performance on untargeted tasks, and prevents the adversary from inferring the forgotten knowledge. 
\end{abstract}

\begin{IEEEkeywords}
LLM-based Agents, Machine Unlearning, Reinforcement Learning, Privacy Preservation
\end{IEEEkeywords}


\section{Introduction}
Large Language Models (LLMs) have demonstrated their potential in various applications, from creative writing to code generation \cite{Wan24TMLR}. Among these, an important application is LLM-driven agents, widely used in AI-powered personal assistants, such as Google's Gemini \cite{GoogleGemini}.
A typical agent is an intelligent entity that perceives its environment and performs actions to maximize its payoffs \cite{Mnih15Nature}. Unlike deep neural network-based agents, which require parameter update during training, LLM-based agents operate by querying the LLM and executing actions based on its responses \cite{Wang24Survey}.
Current research on LLM-based agents focuses on enhancing response quality to improve agent decision-making \cite{Shinn23NIPS,Guo24IJCAI}. However, a crucial yet largely unexplored challenge remains: how to make an LLM-based agent forget previously acquired knowledge. We term this emerging problem as \emph{LLM-based agent unlearning}. 

Machine unlearning has been extensively studied in neural network–based models \cite{Bourtoule21, Xu23}, and recent work has expanded this line of research to LLMs \cite{Liu24NatureMI, Yao24ICLR, Liu24NIPS}.
However, LLM unlearning fundamentally differs from LLM-based agent unlearning. The key difference lies in their objectives: LLM unlearning aims to remove the influence of specific knowledge from the model itself, whereas LLM-based agent unlearning seeks to regulate the agent’s behavior without modifying the underlying LLM, as direct modification of LLM parameters is often restricted in practice \cite{Zhang24ICML}. 
Another branch of machine unlearning research focuses on reinforcement unlearning \cite{Ye25NDSS,Gong25NDSS}, which aims to selectively forget an agent's learned knowledge within a training environment. This approach, however, also differs from LLM-based agent unlearning. Reinforcement unlearning is achieved by directly modifying the model parameters to alter action selection probabilities of the agent. In contrast, LLM-based agents are not standalone neural networks. Their behavior is governed by LLMs, driven by user-provided prompts, without modification of internal model parameters. 
These distinctions highlight the need to establish a new research direction: LLM-based agent unlearning, which introduces two research challenges.


\vspace{1mm}
\noindent\textbf{How can we concretize the unlearning objectives in LLM-based agent unlearning?}
LLM unlearning is typically designed with specific unlearning targets, such as removing knowledge derived from a given dataset \cite{Yao24ICLR}. 
In contrast, LLM-based agent unlearning focuses on modifying the agent's behavior, such as intentionally degrading its performance in a given environment \cite{Ye25NDSS}, often without explicitly defined unlearning objectives.

\vspace{1mm}
\noindent\textbf{How can the performance of an LLM-based agent be maintained after unlearning?} Unlearning can lead to catastrophic forgetting, where the model's performance deteriorates not only on the unlearned tasks but also on the remaining ones. This issue is also critical in LLM-based agents, 
causing the agent to exhibit degraded performance in environments where it should still function optimally. 

\vspace{1mm}
To address the first challenge, we categorize unlearning requests into three distinct scenarios, each representing a specific unlearning objective. For the second challenge, we design unlearning requests in natural language and train a dedicated conversion model that translates these requests into actionable unlearning prompts, ensuring that the LLM effectively controls the agent’s behavior to achieve the desired unlearning outcomes while preserving its performance on unrelated tasks. In summary, we make \textbf{three contributions}.
\begin{itemize}[leftmargin=*]
    \item We take a significant step forward in investigating LLM-based agent unlearning. Our approach enables selective forgetting of learned knowledge by directly shaping the agent’s behavior, offering new conceptual insights for this emerging research area.
    \item We propose a novel framework that categorizes unlearning requests into three distinct scenarios. 
    Within this framework, we develop a natural language-based unlearning method that trains a conversion model to translate unlearning requests into actionable  prompts, enabling precise manipulation of agent behavior through LLMs. 
    \item We conduct both theoretical analysis and empirical evaluation of the proposed unlearning framework in conventional learning tasks as well as under adversarial settings, demonstrating both its effectiveness and robustness.
\end{itemize}

\section{Preliminary}
\noindent\textbf{Large Language Models (LLM).} 
A large language model, denoted as $LLM$, takes a text sample $x$, often referred to as a `prompt', as input and generates another text sample $y$, typically referred to as a `response', i.e., $y=LLM(x)$. 
Text samples processed by LLMs are represented as sequences of tokens: $x = [x^1, x^2, \cdots, x^n]$, where $n$ represents the number of tokens in $x$. These tokens are discrete units of language, which could be words or even characters, depending on the tokenization scheme used. 

\vspace{1mm}
\noindent\textbf{LLM-based Agents.} An LLM-based agent is developed to generate texts and actions contingent upon observed states. Specifically, an action $a_t$ is determined at time $t$ based on the current state $s_t$ and a policy $\pi$, implemented by an LLM. Upon executing $a_t$, the agent receives a scalar reward $r_t$, which evaluates the suitability of $a_t$ in the state $s_t$, subsequently transitioning to the next state $s_{t+1}$. Over time, as the agent executes a series of actions, $a_1,...,a_t$, a trajectory $\tau=\{(s_1,a_1),...,(s_t,a_t)\}$ is formed. 
All available actions $\mathcal{A}$, states $\mathcal{S}$, the transition function $\mathcal{T}:\mathcal{S}\times\mathcal{A}\rightarrow\mathcal{S}$, and the reward function $\mathcal{R}:\mathcal{S}\times\mathcal{A}\rightarrow\mathbb{R}$ collectively define the learning environment $\mathcal{E}=(\mathcal{S},\mathcal{A},\mathcal{T},\mathcal{R})$.

Unlike conventional agent training, which iteratively updates the policy $\pi$, an LLM-based agent typically enhances its performance by collecting and utilizing feedback from an LLM \cite{Shinn23NIPS}. This feedback, along with previously observed states and actions, is stored in the agent's memory, denoted as $Mem$, and is used as part of the input prompt to the LLM to guide the agent's future behavior. 
An overview of an LLM-based agent is shown in Figure \ref{fig:LLMAgent}. The agent observes a state from the environment, e.g., the requirements of a task, and takes an action, e.g., performing the task. The environment provides a reward feedback, reflecting how well the task was performed. The agent stores the state, action, and reward in its memory, typically implemented as a JSON file, and uses this accumulated information to improve its performance on subsequent execution of the task.

\begin{figure}[ht]
\centering
	\includegraphics[scale=0.35]{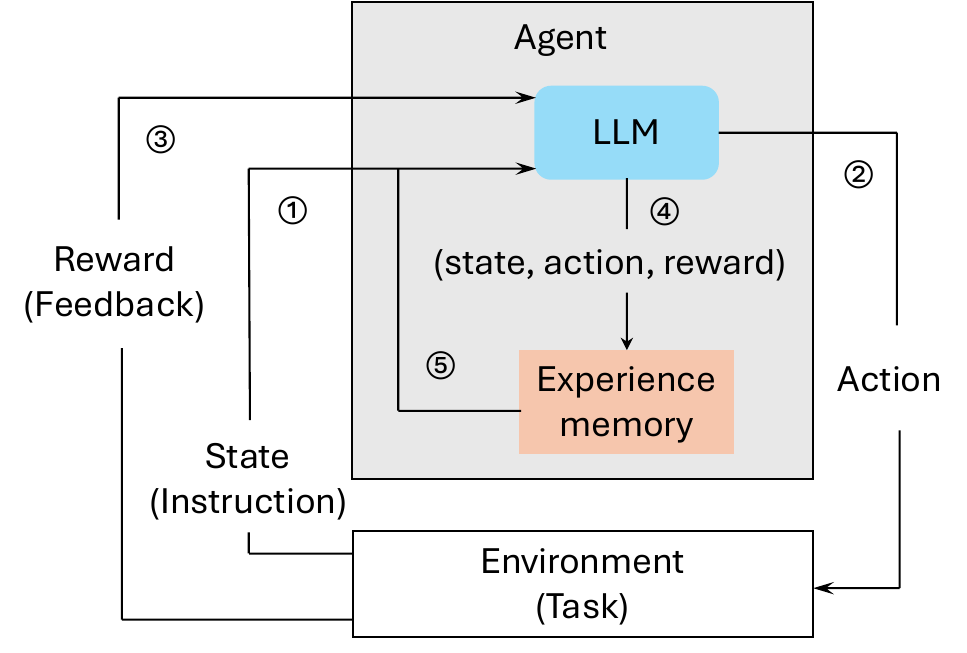}
	\caption{Overview of an LLM-based agent. The agent $\textcircled{1}$ observes a state from the environment and $\textcircled{2}$ takes an action in response. The environment $\textcircled{3}$ returns a reward reflecting how well the action was performed. The agent $\textcircled{4}$ stores the state, action, and reward in its memory, $\textcircled{5}$ using this accumulated information to improve future performance.}
    \vspace{-1mm}
	\label{fig:LLMAgent}
\end{figure}

The notations used in this paper are shown in Table \ref{tab:notation}.

\begin{table}[!ht]\scriptsize
	\centering
 \vspace{-1mm}
	\caption{Summary of Notations}
    \vspace{-0mm}
\begin{tabular} {cl}
\toprule
{\bf Notations} & \hspace{8em}{\bf Description} \\
\midrule
$\mathcal{S}$ & A state set, $\mathcal{S}=\{s_1,...,s_n\}$  \\
$\mathcal{A}$ & An action set, $\mathcal{A}=\{a_1,...,a_m\}$  \\
$\mathcal{T}$ & A transition function, $\mathcal{T}:\mathcal{S}\times\mathcal{A}\rightarrow\mathcal{S}$ \\
$\mathcal{R}$ & A reward function, $\mathcal{R}:\mathcal{S}\times\mathcal{A}\rightarrow\mathbb{R}$ \\
$\mathcal{E}$ & A learning environment/task, $\mathcal{E}=(\mathcal{S},\mathcal{A},\mathcal{T},\mathcal{R})$ \\
$\tau$ & \makecell[l]{A trajectory consisting of state-action pairs: \\\hspace{2mm} $\tau=\{(s_1,a_1),...,(s_t,a_t)\}$}  \\
$Mem$ & \makecell[l]{The agent's memory, $Mem = {(s, a, r)}$, \\\hspace{2mm} storing previously encountered states, actions, and rewards}\\
$\mathcal{X}$ & \makecell[l]{A prompt space, $\mathcal{X}=\{x_1,x_2,...\}$, \\\hspace{2mm} representing a set of diverse prompts}\\
$\pi$ & A policy learned by, or implemented using an LLM for, an agent \\
\bottomrule
\end{tabular}
	\label{tab:notation}
    \vspace{-1mm}
\end{table}

\section{Threat Model and Problem Definition}
We explore three unlearning scenarios, each aligned with a different problem definition. In all cases, the LLM-based agent is controlled by its owner, who can query the LLM, observe its outputs, and use those outputs to guide the agent’s behavior. However, the agent owner cannot access or modify the internal parameters of the LLM. Unlearning is initiated by the agent owner upon identifying that previously used training information has become sensitive or outdated. 


\vspace{1mm}
\noindent\textbf{Scenario 1: Unlearning of States Within a Trajectory.} This unlearning scenario focuses on selectively forgetting specific states, inspired by the agent's learned recognition of restricted areas. For example, consider a self-driving car agent powered by an LLM \cite{Cui24} that has been trained to navigate a city and, in the process, has learned the locations of private residences or government facilities. If the owner of one of these properties invokes their right to data erasure, the agent must unlearn not only the identity of the location but also its spatial relevance within the driving environment. 


\vspace{1mm}
\noindent\textbf{Scenario 2: Unlearning of Trajectories Within an Environment.} This unlearning scenario focuses on selectively erasing specific trajectories within an environment. Consider a domestic service robot equipped with an LLM‑based reasoning module that plans household tasks through trajectory‑level decision making \cite{Wang24CCC}. The robot may learn trajectories that encode private information. For instance, it might repeatedly navigate from the living room to a medication cabinet after hearing the owner mention ``pills'', exposing the owner’s medical condition. 
If the robot is remotely accessed by an adversary, such trajectories could enable inference of the owner’s health information. To mitigate this risk, the robot must be able to unlearn these sensitive trajectories upon request. 





\vspace{1mm}
\noindent\textbf{Scenario 3: Unlearning Across a Spectrum of Training Environments.} This unlearning scenario focuses on forgetting a learning environment. Consider an LLM-based  delivery drone \cite{Chen25TMC} trained across multiple urban and rural environments. During training, the drone learns distinct flight policies optimized for various regions, for example, a coastal industrial zone where metal structures and strong wind patterns influence its flight control strategy.
Later, this industrial zone is decommissioned and classified as a restricted airspace due to security concerns. Retaining learned knowledge of this environment poses a security risk, where an adversary gaining access to the LLM’s prompt history or responses could reconstruct sensitive information about the environment’s vulnerabilities. To mitigate this risk, the drone must selectively unlearn all knowledge derived from this environment. 




\vspace{1mm}
\noindent\textbf{Problem Definition.} 
Since the policy $\pi$ is implemented by an LLM and its internal parameters cannot be modified, we cannot enforce unlearning by altering states or transition probabilities. 
Instead, unlearning must be achieved through two complementary strategies: (1) modifying the agent’s memory $Mem$ to prevent it from relying on the targeted knowledge, and (2) controlling the agent’s behavior to ensure that forgotten information no longer influences its actions.

\begin{itemize}[leftmargin=*]
\item \textbf{Scenario 1.} Given a trajectory $\tau=\{(s_1,a_1),...,(s_m,a_m)\}$, let the set of unlearning states be $\mathcal{S}_u$, and the agent's memory be $Mem$, we aim to achieve:
\begin{equation}\label{eq:S1ClearMem}
    Mem'(s)=\emptyset, \forall s\in\mathcal{S}_u,
\end{equation}
\begin{equation}\label{eq:S1AvoidStates}
    \mathcal{T}(s_{i+1}|s_i,a_i;Mem')=0, \forall s_{i+1}\in\mathcal{S}_u,
\end{equation}
where $1\leq i<m$, and 
\begin{equation}\label{eq:S1KeepPerf}
    \mathcal{T}(s_{k+1}|s_k,a_k;Mem')=\mathcal{T}(s_{k+1}|s_k,a_k;Mem), \forall s_{k+1}\notin\mathcal{S}_u.
\end{equation}
Eq. \ref{eq:S1ClearMem} focuses on erasing the memory associated with the unlearning states. Eq. \ref{eq:S1AvoidStates} aims to prevent the agent from revisiting these states once the memory has been updated. Eq. \ref{eq:S1KeepPerf} seeks to maintain the agent's performance in the remaining states after unlearning. Note that simply removing the corresponding memory, without imposing behavioral constraints, is insufficient to achieve unlearning, as the agent may reacquire this memory in the future.

\item \textbf{Scenario 2.} Given a trajectory $\tau=\{(s_1,a_1),...,(s_m,a_m)\}$, to unlearn it, we aim to achieve
\begin{equation}\label{eq:S2ClearMem}
    Mem'(\tau)=\emptyset, 
\end{equation}
\begin{equation}\label{eq:S2AvoidTraj}
    \mathcal{T}(s_{i+1}|s_i,a_i;Mem')=0,
\end{equation}
where $1\leq i<m$, and
\begin{equation}\label{eq:S2KeepPerf}
    \mathcal{T}(s_{k+1}|s_k,a_k;Mem')=\mathcal{T}(s_{k+1}|s_k,a_k;Mem), \forall s_{k+1}\notin\tau.
\end{equation}
Eq. \ref{eq:S2ClearMem} focuses on removing the memory associated with the unlearning trajectory. Eq. \ref{eq:S2AvoidTraj} aims to prevent the formation of the unlearned trajectory in the future. Eq. \ref{eq:S2KeepPerf} ensures that the agent maintains performance on all other trajectories. 
The key difference between Scenarios 1 and 2 lies in the restrictions placed on the agent's access to states. In Scenario 1, the objective is to completely prohibit the agent from accessing the forgotten states. In Scenario 2, the agent is permitted to visit the states in the unlearning trajectory, but it must not traverse them in the predetermined sequence.

\item \textbf{Scenario 3.} Given a training environment $\mathcal{E}=(\mathcal{S},\mathcal{A},\mathcal{T},\mathcal{R})$, the objectives for unlearning it are:
\begin{equation}\label{eq:S3ClearMem}
    Mem'(s)=\emptyset, \forall s\in\mathcal{S},
\end{equation}
\begin{equation}\label{eq:S3ReducePerf}
   \mathcal{R}(s,\pi(s;Mem'))<\mathcal{R}(s,\pi(s;Mem)), \forall s\in\mathcal{S},
\end{equation}
\begin{equation}\label{eq:S3KeepPerf}
   \mathcal{R}(s,\pi(s;Mem'))=\mathcal{R}(s,\pi(s;Mem)), \forall s\notin\mathcal{S}.
\end{equation}
Eq. \ref{eq:S3ClearMem} aims to erase the memory associated with the unlearning environment. Eq. \ref{eq:S3ReducePerf} seeks to reduce the agent's performance within the unlearning environment as if it has never seen this environment. Eq. \ref{eq:S3KeepPerf} focuses on maintaining the agent's performance in remaining environments.
\end{itemize}

\section{Methodology}
We propose a unified approach to address the three distinct unlearning scenarios. The overall execution of this approach is illustrated in Figure \ref{fig:Overview}. 
First, an unlearning request $x$ is crafted and provided to the conversion model $\mathcal{C}$. Second, model $\mathcal{C}$ generates a corresponding unlearning prompt $y$, which is then supplied to the underlying LLM. Third, the LLM leverages this prompt to guide the agent’s behavior toward achieving the specified unlearning objective. Finally, the resulting agent behavior is used as feedback to further train the conversion model $\mathcal{C}$. The core component of this approach is therefore the training of $\mathcal{C}$, involving four steps described below.

\vspace{-2mm}
\begin{figure}[h!]
\centering
	\includegraphics[scale=0.4]{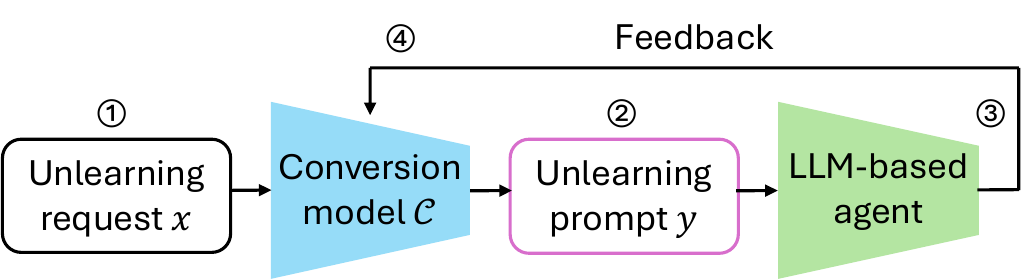}
	\caption{The unlearning approach comprises four phases: \textcircled{1} crafting an unlearning request $x$ and providing it to the conversion model $\mathcal{C}$; \textcircled{2} generating the corresponding unlearning prompt $y$ and supplying it to the LLM; and \textcircled{3} guiding the agent’s behavior toward the specified unlearning objective, with the resulting behavior used as feedback to refine $\mathcal{C}$ \textcircled{4}.}
    \vspace{-0mm}
	\label{fig:Overview}
\end{figure}


\vspace{1mm}
\noindent\textbf{Step 1.} Let the base model be $\pi_{\mathrm{base}}$, such as GPT-Neo-2.7B \cite{GPT-Neo}. We first initialize the conversion model $\mathcal{C}$ using the parameters of the base model, i.e., $\pi_{\phi} = \pi_{\mathrm{base}}$, where $\phi$ represents the parameters of $\mathcal{C}$. Then, we input unlearning request $x$ into $\mathcal{C}$ iteratively $m$ times to generate $m$ corresponding unlearning prompts, denoted as $y_1, \ldots, y_m$.

\vspace{1mm}
\noindent\textbf{Step 2.} We input the generated unlearning prompts $y_1, \ldots, y_m$ into the $LLM$, which guides the agent’s behavior, to evaluate their unlearning effectiveness. If the agent successfully achieves the unlearning objective, the corresponding prompt $y_i$ is labeled as a preferred unlearning prompt; otherwise, it is designated as a dispreferred prompt.

\vspace{1mm}
\noindent\textbf{Step 3.} We construct a training dataset $D = \{(x, y_p, y_q)\}^n$, where $y_p$ denotes a preferred prompt and $y_q$ denotes a dispreferred prompt. The number of training samples in $D$ depends on the counts of both preferred and dispreferred prompts. If there are $p$ preferred prompts and $q$ dispreferred prompts, the total number of training data is $p \times q$. Furthermore, the dataset $D$ can be expanded by incorporating additional unlearning requests to enhance training diversity.

\vspace{1mm}
\noindent\textbf{Step 4.} We use the dataset $D$ to train the conversion model $\mathcal{C}$ by fine-tuning its parameters $\pi_{\phi}$ with the loss function:
\begin{equation}\label{eq:optimization}
\begin{aligned}
        \mathcal{L}_{\mathcal{C}}=-\mathbb{E}_{(x,y_p,y_q)\sim D}[\mathrm{log}\sigma(\beta\mathrm{log}\frac{\pi_{\phi}(y_p|x)}{\pi_{\mathrm{base}}(y_p|x)}\\-\beta\mathrm{log}\frac{\pi_{\phi}(y_q|x)}{\pi_{\mathrm{base}}(y_q|x)})],
\end{aligned}
\end{equation}
where $\sigma(\cdot)$ is the sigmoid function, and $\beta$ is a scaling parameter that controls the deviation of the conversion model from the base model $\pi_{\mathrm{base}}$.

\vspace{1mm}
\noindent\textbf{Unlearning Execution.} 
In \emph{state unlearning}, we begin by removing the memory associated with the unlearning states, $s \in \mathcal{S}_u$, from the agent’s memory $Mem$. 
Then, we inhibit the agent from performing any actions that could lead to these unlearning states. 

In \emph{trajectory unlearning}, given an unlearning trajectory $\tau_u = \{(s_1, a_1), \ldots, (s_m, a_m)\}$, the sequence of transitions $s_1 \rightarrow \cdots \rightarrow s_m$ should be unlearned, while allowing the agent to access each state independently. We begin by removing the memory entries associated with the target trajectory from $Mem$. Next, we inhibit the agent from performing the sequence of actions that would reconstruct the forgotten trajectory, while still permitting it to visit each individual state within that trajectory.


In \emph{environment unlearning}, the agent is required to erase all knowledge associated with an entire environment. 
We begin by removing from the agent’s memory, $Mem$, all information related to the states of the target environment. Then, we instruct the agent to intentionally degrade its performance within that environment, ensuring that it behaves as if encountering it for the first time.

Table \ref{tab:Example}, in the Appendix, presents examples of unlearning requests and the corresponding unlearning prompts generated by the conversion model $\mathcal{C}$ in AlfWorld. 

\vspace{1mm}
\noindent\textbf{Why Prompt-Based Unlearning Achieves Effective Forgetting.} We analyze the effectiveness of prompt-based unlearning from the perspective of policy shaping. Consider an LLM-based agent whose behavior is governed by a policy $\pi_{\theta}(a \mid s; x)$, where $\theta$ denotes the fixed parameters of the underlying LLM, $s$ is the observed state, $a$ is the selected action, and $x$ is the input prompt encoding relevant contextual information such as history, goals, memory, and constraints. Let $f: \mathcal{X} \rightarrow \mathcal{X}$ be a prompt transformation function that modifies the input prompt (e.g., by erasing specific memory components) to fulfill a particular unlearning objective. Let the transformed prompt be $x' = f(x)$, we define unlearning to be successful if and only if: 
\begin{equation}\nonumber
    \pi_{\theta}(a \mid s; x')\approx\pi_{\text{ref}}(a\mid s)
\end{equation}
for all $s\in\mathcal{S}$, where $\pi_{\text{ref}}$ is a reference policy derived from an agent that has never seen the unlearning information. Under this formulation, prompt-based unlearning is equivalent to a constrained transformation of the agent's policy:
\begin{equation}\nonumber
    \text{Find}\ f\ \text{such that}\ d(\pi_{\theta}(a \mid s; x')||\pi_{\text{ref}}(a\mid s))\leq\epsilon,
\end{equation}
where $d(\cdot || \cdot)$ is a divergence metric (e.g., KL-divergence), and $\epsilon$ is a small tolerance threshold. This formulation shows that prompt modification serves as an indirect but effective mechanism for policy shaping, achieving behavioral forgetting without altering the LLM's internal parameters $\theta$.

\section{Theoretical Analysis}
Our analysis focuses on the utility and convergence properties of the optimization process, along with the robustness of the proposed approach. The detailed proofs of these results are provided in the Appendix.

\vspace{1mm}
\noindent\textbf{(1) Utility Analysis of the Optimization Process.} The loss function in Eq. \ref{eq:optimization} can be rewritten as:
\begin{equation}\label{eq:loss}
\begin{aligned}
    \mathcal{L}_\mathcal{C}=-\mathbb{E}_{(x,y_p,y_q)\sim D}[\mathrm{log}\sigma(\beta[\Delta_\phi-\Delta_{\mathrm{base}}])],
\end{aligned}
\end{equation}
where $\Delta_\phi=\mathrm{log}\pi_{\phi}(y_p|x)-\mathrm{log}\pi_\phi(y_q|x)$ and $\Delta_{\mathrm{base}}=\mathrm{log}\pi_{\mathrm{base}}(y_p|x)-\mathrm{log}\pi_{\mathrm{base}}(y_q|x)$. 

Assuming that the true preferences are generated according to a Bradley–Terry model, the oracle’s probability of preferring $y_p$ over $y_q$ under context $x$ is given by
\begin{equation}\label{eq:preference}
    p^*(y_p,y_q|x)=\mathrm{Pr}(y_p\succ y_q|x)=\sigma(\beta[Q(x,y_p)-Q(x,y_q)]),
\end{equation}
where $Q(x,y)$ denotes a latent utility function assigning a numerical score to each output $y$ given $x$.
This assumption is well justified in human preference modeling: empirical analyses of RLHF datasets (e.g., OpenAI’s human feedback data \cite{Ouyang22NIPS}) have shown that the probability of preferring one response over another typically follows a sigmoidal relationship with respect to the reward (utility) difference, as further validated across multiple datasets \cite{Rafailov23NIPS}.

By combining Eq. \ref{eq:loss} and Eq. \ref{eq:preference}, the expected pointwise loss for a given $\Delta$ can be expressed as
\begin{equation}\nonumber
    r(\Delta)=-p^*\mathrm{log}\sigma(\beta(\Delta-\Delta_{\mathrm{base}}))-(1-p^*)\mathrm{log}\sigma(-\beta(\Delta-\Delta_{\mathrm{base}})).
\end{equation}
Differentiating $r(\Delta)$ with respect to $\Delta$ yields
\begin{equation}\nonumber
    \frac{\partial r}{\partial\Delta}=-\beta[p^*-\sigma(\beta(\Delta-\Delta_{\mathrm{base}}))].
\end{equation}
Setting $\frac{\partial r}{\partial \Delta}=0$ gives the stationary point:
\begin{equation}\nonumber
\begin{aligned}
    \sigma(\beta(\Delta^*-\Delta_{\mathrm{base}}))&=p^*\\
    \beta(\Delta^*-\Delta_{\mathrm{base}})&=\sigma^{-1}(p^*)\\
    &=\beta(Q(x,y_p)-Q(x,y_q)).
\end{aligned}
\end{equation}
Hence, the optimal logit difference satisfies
\begin{equation}\nonumber
    \Delta^*_\phi=\Delta_{\mathrm{base}}+Q(x,y_p)-Q(x,y_q).
\end{equation}
This result indicates that the log-probability difference assigned by the learned model between $y_p$ and $y_q$ equals the base log-odds plus the true utility difference. Equivalently, this can be expressed at the distributional level as
\begin{equation}\label{eq:prop}
    \pi^*_\phi(y|x)\propto\pi_{\mathrm{base}}(y|x)e^{Q(x,y)}.
\end{equation}
Thus, minimizing the loss $\mathcal{L}_\mathcal{C}$ encourages the learned policy $\pi_\phi$ to match the true preference ratios, tilting the base policy $\pi_{\mathrm{base}}$ toward outputs with higher latent utility $Q(x,y)$.

\vspace{1mm}
\noindent\textbf{(2) Convergence Analysis of the Optimization Process.} To demonstrate the convergence, it suffices to show that the loss function $\mathcal{L}_{\mathcal{C}}$ monotonically decreases and gradually approaches its optimal value as the optimization progresses. 

Let $z=\beta(\Delta_\phi-\Delta_{\mathrm{base}})$ and $g(z)=-\mathrm{log}\sigma(z)$. Then, the first-order derivative of $\mathcal{L}_{\mathcal{C}}$ with respect to $\phi$ is obtained as:
\begin{equation}\label{eq:first-order}
\begin{aligned}
    \nabla_\phi\mathcal{L}_{\mathcal{C}}&=g'(z)\nabla_\phi z=-\sigma(-z)\beta\nabla_\phi\Delta_\phi\\
    &=-\sigma(-z)\beta\nabla_\phi(\phi^\top\Delta\psi)=-\sigma(-z)\beta\Delta\psi,
\end{aligned}
\end{equation}
where $\Delta\psi=\psi(x,y_p)-\psi(x,y_q)$ and $\psi(x,y)$ denotes the latent feature associated with the prompt–response pair $(x, y)$.
Similarly, the second-order derivative is given by:
\begin{equation}\label{eq:second-order}
\begin{aligned}
    \nabla^2_\phi\mathcal{L}_{\mathcal{C}}=g''(z)(\nabla_\phi z)(\nabla_\phi z)^\top=\beta^2\sigma(z)\sigma(-z)\Delta\psi(\Delta\psi)^\top.
\end{aligned}
\end{equation}

\begin{lem}[Smoothness]\label{lem:smoothness}
    If $||\Delta\psi||$ is upper-bounded by $B$ almost surely, then $\mathcal{L}_{\mathcal{C}}$ is $L$-smooth with $L\leq\frac{\beta^2B^2}{4}$.
\end{lem}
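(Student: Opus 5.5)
The plan is to bound the operator norm of the Hessian in Eq.~\ref{eq:second-order} uniformly in $\phi$. A twice-differentiable function whose Hessian satisfies $\|\nabla^2_\phi\mathcal{L}_{\mathcal{C}}\|_{\mathrm{op}}\leq L$ everywhere has an $L$-Lipschitz gradient. This follows from the mean value theorem in integral form,
\begin{equation}\nonumber
\nabla_\phi\mathcal{L}_{\mathcal{C}}(\phi_1)-\nabla_\phi\mathcal{L}_{\mathcal{C}}(\phi_2)=\int_0^1\nabla^2_\phi\mathcal{L}_{\mathcal{C}}(\phi_2+t(\phi_1-\phi_2))\,(\phi_1-\phi_2)\,dt .
\end{equation}
Throughout, I will work in the linear parameterization already used in Eq.~\ref{eq:first-order}, namely $\Delta_\phi=\phi^\top\Delta\psi$. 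Under this choice $z$ is affine in $\phi$, so the Hessian consists only of the $g''(z)$ term displayed in Eq.~\ref{eq:second-order}.

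\textbf{Step 1 (pointwise Hessian).} For a single sample $(x,y_p,y_q)$, compute $g(z)=-\log\sigma(z)$. We have $g'(z)=-\sigma(-z)$ and $g''(z)=\sigma(z)\sigma(-z)$. Since $\nabla_\phi z=\beta\Delta\psi$, this gives the per-sample Hessian $H=\beta^2\sigma(z)\sigma(-z)\,\Delta\psi(\Delta\psi)^\top$. This matrix is positive semidefinite and has rank one.

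\textbf{Step 2 (bound the scalar factor).} Write $\sigma(z)\sigma(-z)=\sigma(z)(1-\sigma(z))=u(1-u)$ with $u\in(0,1)$. By AM--GM, $u(1-u)\leq 1/4$, with equality at $z=0$.

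\textbf{Step 3 (bound the rank-one factor).} The operator norm of $\Delta\psi(\Delta\psi)^\top$ is $\|\Delta\psi\|^2$, since its only nonzero eigenvalue is $\|\Delta\psi\|^2$ with eigenvector $\Delta\psi$. Under the assumption, this is at most $B^2$ almost surely. Together with Step 2, this gives $\|H\|_{\mathrm{op}}\leq\beta^2B^2/4$ almost surely.

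\textbf{Step 4 (pass to the expectation).} The loss $\mathcal{L}_{\mathcal{C}}$ is the expectation over $D$ of the per-sample loss. The per-sample gradients are bounded by $\beta B$, since $|\sigma(-z)|\le 1$, and the per-sample Hessians are bounded by Step 3. Hence differentiation under the expectation is justified by dominated convergence, and $\nabla^2_\phi\mathcal{L}_{\mathcal{C}}=\mathbb{E}[H]$. Convexity of the operator norm (Jensen) then yields $\|\nabla^2_\phi\mathcal{L}_{\mathcal{C}}\|_{\mathrm{op}}\leq\mathbb{E}\|H\|_{\mathrm{op}}\leq\beta^2B^2/4$. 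By the integral identity above, the gradient is $L$-Lipschitz with $L\leq\beta^2B^2/4$.

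The only delicate point is the modeling assumption that $\Delta_\phi$ is linear in $\phi$. Under that assumption the computation is exact, and Steps 1--4 are routine. If $\Delta_\phi$ were a general nonlinear function of $\phi$, an extra term $g'(z)\beta\nabla^2_\phi\Delta_\phi$ would appear in the Hessian, and the stated constant would not hold without further assumptions. I will therefore state explicitly that the lemma is proved within the linear (feature) parameterization adopted in Eq.~\ref{eq:first-order}.
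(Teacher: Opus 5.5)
Your proposal is correct and follows the same route as the paper: bound the Hessian in Eq.~\ref{eq:second-order} using $\sigma(z)\sigma(-z)\le 1/4$ and $\|\Delta\psi\|^2\le B^2$, which gives $\|\nabla^2_\phi\mathcal{L}_{\mathcal{C}}\|_{\mathrm{op}}\le\beta^2B^2/4$. Your extra care fills in steps the paper leaves implicit without changing the argument: the integral form of the mean value theorem, passing the bound through the expectation over $D$, and stating the linear parameterization $\Delta_\phi=\phi^\top\Delta\psi$ explicitly.
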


\begin{lem}[Convexity]\label{lem:convexity}
    Suppose $\Delta\psi(\Delta\psi)^\top \succeq \mu I$, where $\mu>0$ is the minimum eigenvalue of the feature covariance matrix and $I$ is the identity matrix. If $\sigma(z)\sigma(-z) \geq \epsilon$ for some $\epsilon > 0$, then $\mathcal{L}_{\mathcal{C}}$ is $\alpha$-strongly convex with $\alpha = \beta^{2}\epsilon\mu$.
\end{lem}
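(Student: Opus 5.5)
The argument rests on the Hessian formula in Eq.~\ref{eq:second-order}, which holds under the linear parameterization already used in Eq.~\ref{eq:first-order}, namely $\Delta_\phi=\phi^\top\Delta\psi$. With this parameterization, $z=\beta(\phi^\top\Delta\psi-\Delta_{\mathrm{base}})$ is affine in $\phi$. Hence $\nabla^2_\phi z=0$, and the Hessian of the pointwise loss reduces to $g''(z)\,\beta^2\Delta\psi(\Delta\psi)^\top$. Here $g''(z)=\sigma(z)\sigma(-z)$, since $g'(z)=-\sigma(-z)$ and $\sigma'(u)=\sigma(u)\sigma(-u)$. I would first restate this briefly, because the whole proof is a lower bound on the Hessian in the Loewner order. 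The expectation over $D$ preserves that bound, since it is a convex combination of positive semidefinite matrices, so it suffices to bound the integrand.

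The key step is the chain of matrix inequalities
\begin{equation}\nonumber
\nabla^2_\phi\mathcal{L}_{\mathcal{C}}=\beta^2\,\mathbb{E}\big[\sigma(z)\sigma(-z)\Delta\psi(\Delta\psi)^\top\big]\succeq \beta^2\epsilon\,\mathbb{E}\big[\Delta\psi(\Delta\psi)^\top\big]\succeq\beta^2\epsilon\mu I .
\end{equation}
The first inequality uses the scalar bound $\sigma(z)\sigma(-z)\ge\epsilon$ together with $\Delta\psi(\Delta\psi)^\top\succeq 0$. Indeed, for any unit vector $v$ we have $\sigma(z)\sigma(-z)(v^\top\Delta\psi)^2\ge\epsilon(v^\top\Delta\psi)^2$. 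The second inequality applies the hypothesis $\Delta\psi(\Delta\psi)^\top\succeq\mu I$, read as a statement about the (expected) feature covariance matrix.

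Finally, I would invoke the standard second-order characterization: a twice-differentiable function with $\nabla^2 f\succeq\alpha I$ everywhere on a convex domain is $\alpha$-strongly convex. To see this, apply Taylor's theorem with integral remainder along the segment between $\phi_1$ and $\phi_2$. This gives
\begin{equation}\nonumber
f(\phi_2)\ge f(\phi_1)+\nabla f(\phi_1)^\top(\phi_2-\phi_1)+\tfrac{\alpha}{2}\|\phi_2-\phi_1\|^2 .
\end{equation}
This yields $\alpha=\beta^2\epsilon\mu$.

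The main obstacle is interpreting the hypotheses correctly. A single rank-one matrix $\Delta\psi(\Delta\psi)^\top$ cannot dominate $\mu I$ in dimension greater than one. The assumption therefore has to be understood as applying to the expectation $\mathbb{E}[\Delta\psi(\Delta\psi)^\top]$, consistent with the phrase ``feature covariance matrix.'' Similarly, $\sigma(z)\sigma(-z)\ge\epsilon$ must hold uniformly over the region of $\phi$ under consideration. In practice this means $|z|$ stays bounded, for instance on a bounded parameter set. I would state both readings explicitly, so that the Loewner bound holds for every $\phi$ in that region and strong convexity holds there.
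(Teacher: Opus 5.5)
Your proof is correct and follows the paper's argument. Like the paper, you bound the Hessian from Eq.~\ref{eq:second-order} below by $\beta^2\epsilon\mu I$ in the Loewner order and then invoke the second-order characterization of $\alpha$-strong convexity; your two clarifications, reading $\Delta\psi(\Delta\psi)^\top\succeq\mu I$ as a statement about the expected feature covariance and requiring $\sigma(z)\sigma(-z)\ge\epsilon$ uniformly on a convex region of $\phi$, are sound and supply assumptions the paper's proof leaves implicit.
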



\begin{thm}[Convergence]\label{thm:convergence}
Under Lemmas \ref{lem:smoothness} and \ref{lem:convexity}, the loss $\mathcal{L}_{\mathcal{C}}$ converges linearly under gradient descent with step size $\eta \in (0, \frac{2}{L+\alpha}]$:
    \begin{equation}\nonumber
        \mathcal{L}_{\mathcal{C}}(\phi_{t+1})-\mathcal{L}_{\mathcal{C}}(\phi^*)\leq(1-\eta\alpha)(\mathcal{L}_{\mathcal{C}}(\phi_{t})-\mathcal{L}_{\mathcal{C}}(\phi^*)),
    \end{equation}
    where $\phi^*$ denotes the global minimizer.
\end{thm}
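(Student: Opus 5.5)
The plan is to establish the one-step contraction for a generic function $F=\mathcal{L}_{\mathcal{C}}$ that is $L$-smooth (Lemma~\ref{lem:smoothness}) and $\alpha$-strongly convex (Lemma~\ref{lem:convexity}). Write $g_t=\nabla F(\phi_t)$, $\phi_{t+1}=\phi_t-\eta g_t$, $\Delta_t=F(\phi_t)-F(\phi^*)$ and $G_t=\|g_t\|^2$. The argument splits into two step-size regimes.

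\textbf{Case $\eta\le 1/L$.} The descent lemma gives
\[
F(\phi_{t+1})\le F(\phi_t)-\eta\big(1-\tfrac{L\eta}{2}\big)G_t\le F(\phi_t)-\tfrac{\eta}{2}G_t .
\]
Strong convexity yields the Polyak--{\L}ojasiewicz bound $G_t\ge 2\alpha\Delta_t$. Substituting and subtracting $F(\phi^*)$ gives exactly $\Delta_{t+1}\le(1-\eta\alpha)\Delta_t$.

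\textbf{Case $\eta\in(1/L,\,2/(L+\alpha)]$.} Here the descent lemma alone is too weak, since the coefficient $1-L\eta/2$ falls below $1/2$. I would first check that the claimed rate is correct by testing it on quadratics with curvature $\lambda\in[\alpha,L]$. On such a function the one-step factor is $(1-\eta\lambda)^2$. This quantity is convex in $\lambda$, so it suffices to check the endpoints. At $\lambda=\alpha$ the inequality $(1-\eta\alpha)^2\le 1-\eta\alpha$ is immediate. At $\lambda=L$, the inequality $(1-\eta L)^2\le 1-\eta\alpha$ is equivalent to $\eta\le(2L-\alpha)/L^2$. This holds because $2/(L+\alpha)\le(2L-\alpha)/L^2$ is equivalent to $\alpha(L-\alpha)\ge 0$. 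The bound is tight exactly at $\eta=2/(L+\alpha)$, $\lambda=L$, which explains the step-size limit in the statement.

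To pass from quadratics to general $F$, I would use the smooth strongly convex interpolation inequality: for all $u,v$,
\[
F(u)\ge F(v)+\langle\nabla F(v),u-v\rangle+\tfrac{1}{2(1-\alpha/L)}\Big(\tfrac1L\|\nabla F(u)-\nabla F(v)\|^2+\alpha\|u-v\|^2-\tfrac{2\alpha}{L}\langle\nabla F(v)-\nabla F(u),v-u\rangle\Big).
\]
I would apply it to the three pairs among $\{\phi_t,\phi_{t+1},\phi^*\}$, with $\nabla F(\phi^*)=0$ and $\phi_{t+1}-\phi_t=-\eta g_t$. The aim is to find nonnegative multipliers $\lambda_{ij}$ such that
\[
(1-\eta\alpha)\Delta_t-\Delta_{t+1}=\textstyle\sum\lambda_{ij}(\text{inequality}_{ij})+(\text{a PSD quadratic form in } g_t,\ \nabla F(\phi_{t+1}),\ \phi_t-\phi^*).
\]
This is the performance-estimation certificate. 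Since the quadratic worst case above is tight, the multipliers should be identifiable from the quadratic analysis, with positive weight on the pairs $(\phi_{t+1},\phi_t)$ and $(\phi^*,\phi_{t+1})$.

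\textbf{Main obstacle.} The hard step is choosing these multipliers and verifying positive semidefiniteness of the residual $3\times 3$ Gram form over the whole interval $\eta\in(1/L,2/(L+\alpha)]$. My first attempt would be multipliers linear in $\eta$ that reduce to the Case~1 combination at $\eta=1/L$. I would then check the determinant condition, expecting it to reduce to the factor $\alpha(L-\alpha)\ge 0$ found in the quadratic test.

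Combining the two cases proves Theorem~\ref{thm:convergence}. Iterating gives $\Delta_t\le(1-\eta\alpha)^t\Delta_0$, which is linear convergence.
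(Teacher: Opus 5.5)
\textbf{Case 2 is not proved, so the proposal establishes the theorem only for $\eta\le 1/L$.}

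Your Case~1 is correct, and it coincides with the paper's entire proof: descent lemma plus the Polyak--{\L}ojasiewicz bound, giving the factor $1-2\eta\alpha(1-\frac{L\eta}{2})$. The paper then asserts this factor is at most $1-\eta\alpha$ for all $\eta\le\frac{2}{L+\alpha}$. That is true only for $\eta\le 1/L$; for example, $\alpha=1$, $L=3$, $\eta=0.45$ gives $0.7075>0.55$. So you have correctly found a real hole that the paper does not fill.

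Your Case~2 does not fill it either. No multipliers are exhibited and no semidefiniteness is checked. The heuristic meant to guide the search is also false:
\begin{itemize}
\item At $\eta=\frac{2}{L+\alpha}$, $\lambda=L$, one has $(1-\eta L)^2=\kappa^2<\kappa=1-\eta\alpha$, where $\kappa=\frac{L-\alpha}{L+\alpha}$. So the bound is not tight there.
\item For every $\eta\le\frac{2}{L+\alpha}$ the quadratic worst case is $(1-\eta\alpha)^2<1-\eta\alpha$. The claimed rate is never attained, so there is no extremal instance from which to read off multipliers.
\item Your own computation lets quadratics go up to $\eta=(2L-\alpha)/L^2$, so it does not explain the limit $\frac{2}{L+\alpha}$.
\end{itemize}

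Your first attempt at a certificate also fails on exactly the range you need. Suppose only the two inequalities bounding $\mathcal{L}_{\mathcal{C}}(\phi_{t+1})$ from above, via $\phi_t$ and via $\phi^*$, carry weight. Matching the function-value coefficients forces the weights $1-\eta\alpha$ and $\eta\alpha$. After maximizing over $\phi_{t+1}-\phi^*$, the resulting bound on $\Delta_{t+1}-(1-\eta\alpha)\Delta_t$ has coefficient $\frac{\eta}{2}-\frac{1-\eta\alpha}{2(L-\alpha)}$ on $\|\nabla\mathcal{L}_{\mathcal{C}}(\phi_{t+1})\|^2$. This coefficient is positive exactly when $\eta>1/L$. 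A valid certificate must therefore use further pairs (one exists, since the statement is true). Finding it is the entire difficulty, and you leave it open.

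\textbf{The missing idea is a decomposition that covers the whole range at once, with no case split and no SDP.} Let $h(\phi)=\mathcal{L}_{\mathcal{C}}(\phi)-\mathcal{L}_{\mathcal{C}}(\phi^*)-\frac{\alpha}{2}\|\phi-\phi^*\|^2$. By the two lemmas, $h$ is convex and $(L-\alpha)$-smooth, with $h(\phi^*)=0$ and $\nabla h(\phi^*)=0$. If $\alpha=L$ then $h\equiv 0$ and the claim is trivial, so assume $\alpha<L$, which gives $\eta\alpha<1$.

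The gradient step can be written as $\phi_{t+1}=(1-\eta\alpha)y+\eta\alpha\phi^*$, where $y=\phi_t-\eta'\nabla h(\phi_t)$ and $\eta'=\eta/(1-\eta\alpha)$. The condition $\eta\le\frac{2}{L+\alpha}$ is equivalent to $\eta'\le\frac{2}{L-\alpha}$; this is where the step-size limit really comes from. Three standard facts then apply:
\begin{itemize}
\item Convexity of $h$ on $[\phi^*,y]$ gives $h(\phi_{t+1})\le(1-\eta\alpha)h(y)$.
\item The descent lemma for $h$ gives $h(y)\le h(\phi_t)$.
\item Co-coercivity of $\nabla h$ gives $\|y-\phi^*\|\le\|\phi_t-\phi^*\|$.
\end{itemize}
Therefore
\[
\Delta_{t+1}=h(\phi_{t+1})+\tfrac{\alpha}{2}(1-\eta\alpha)^2\|y-\phi^*\|^2\le(1-\eta\alpha)h(\phi_t)+(1-\eta\alpha)^2\tfrac{\alpha}{2}\|\phi_t-\phi^*\|^2\le(1-\eta\alpha)\Delta_t ,
\]
which is the claim for every $\eta\in(0,\frac{2}{L+\alpha}]$.
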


Theorem \ref{thm:convergence} establishes that the loss gap $\mathcal{L}_{\mathcal{C}}(\phi_t) - \mathcal{L}_{\mathcal{C}}(\phi^*)$ shrinks geometrically with each iteration, confirming linear convergence toward the optimal solution. 

\vspace{1mm}
\noindent\textbf{(3) Robustness Analysis.} Our analysis establishes an upper bound on the KL divergence between the agent behavioral distributions of the prompt-based unlearned LLM and a reference LLM that has never seen the unlearned contents.

\begin{thm}\label{thm:KLBound}
    Let $P_{\mathrm{un}}$ denote the agent behavior distribution instructed by the LLM after unlearning, and $P_{\mathrm{ref}}$ denote the agent behavior distribution instructed by the reference LLM that has never been exposed to the unlearned contents: 
\begin{equation}\nonumber
\begin{aligned}
    &P_{\mathrm{un}}=\prod_t\pi_\theta(a_t|s_t;\pi_\phi(x_t))\mathcal{T}(s_{t+1}|s_t,a_t);\\
    &P_{\mathrm{ref}}=\prod_t\pi_{\mathrm{ref}}(a_t|s_t)\mathcal{T}(s_{t+1}|s_t,a_t),
\end{aligned}
\end{equation}
where $\mathcal{T}$ represents the environment transition dynamics. Then, the KL-divergence between the two behavior distributions is bounded as
\begin{equation}\nonumber
    d_{\mathrm{KL}}(P_{\mathrm{un}}\parallel P_{\mathrm{ref}})\leq L\mathbb{E}_{x}[d_{\mathrm{KL}}(\pi_\phi(\cdot|x)\parallel \pi^*_\phi(\cdot|x))]\leq L\epsilon,
\end{equation}
where $L$ is the Lipchitz constant, $\epsilon=\frac{1}{\beta}\mathbb{E}_{y\sim\pi_\phi}[r^*(x,y)-r(x,y)]$, and $r^*(x, y)$ and $r(x, y)$ denote the optimal and empirical rewards, respectively, when generating $y$ from $x$.
\end{thm}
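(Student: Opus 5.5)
We bound the trajectory divergence by per-step policy divergences, then relate those to the conversion-model divergence and finally to the reward gap. The chain rule of KL divergence is applied to the two trajectory distributions. Both $P_{\mathrm{un}}$ and $P_{\mathrm{ref}}$ share the same transition kernel $\mathcal{T}$, so those factors cancel in the log-ratio. This leaves
\begin{equation}\nonumber
d_{\mathrm{KL}}(P_{\mathrm{un}}\parallel P_{\mathrm{ref}})=\sum_t \mathbb{E}_{s_t\sim P_{\mathrm{un}}}\big[d_{\mathrm{KL}}(\pi_\theta(\cdot|s_t;\pi_\phi(x_t))\parallel \pi_{\mathrm{ref}}(\cdot|s_t))\big].
\end{equation}
The trajectory-level divergence is therefore an expected sum of per-step action-policy divergences, and the environment dynamics drop out completely.

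Next, the per-step divergence is tied to the conversion model. We identify the reference behaviour with the LLM prompted by the \emph{ideal} unlearning prompt distribution $\pi^*_\phi(\cdot|x)$. This is the tilted optimum of Eq.~\ref{eq:prop}, and it realizes the successful-unlearning condition $\pi_\theta(a|s;x')\approx\pi_{\mathrm{ref}}(a|s)$ from the policy-shaping formulation. We then assume that the map from prompt distribution to action distribution, $q\mapsto \pi_\theta(\cdot|s;q)$, is $L$-Lipschitz in KL; this is the ``Lipschitz constant'' in the statement. By data processing (the LLM together with the environment is a channel applied to the sampled prompt $y$), the per-step action KL is at most $L\, d_{\mathrm{KL}}(\pi_\phi(\cdot|x_t)\parallel\pi^*_\phi(\cdot|x_t))$, and one could even take $L=1$. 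Averaging over the prompt contexts $x$ visited along the trajectory, and absorbing the horizon and the visitation distribution into $L$, gives the first inequality.

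For the second inequality we use the closed form $\pi^*_\phi(y|x)=\pi_{\mathrm{base}}(y|x)\exp(r^*(x,y)/\beta)/Z^*(x)$. Here $r^*$ plays the role of $\beta Q$, consistent with the RLHF/DPO normalization. Similarly, the learned policy is written as $\pi_\phi(y|x)=\pi_{\mathrm{base}}(y|x)\exp(r(x,y)/\beta)/Z(x)$, with $r$ its implicit (empirical) reward. Substituting both forms into the KL divergence gives
\begin{equation}\nonumber
d_{\mathrm{KL}}(\pi_\phi\parallel\pi^*_\phi)=\tfrac{1}{\beta}\mathbb{E}_{y\sim\pi_\phi}[r(x,y)-r^*(x,y)]+\log\tfrac{Z^*(x)}{Z(x)}.
\end{equation}

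The main obstacle is controlling the partition-function term and getting the sign to match the stated $\epsilon=\frac{1}{\beta}\mathbb{E}_{y\sim\pi_\phi}[r^*-r]$. I would handle this in two ways. First, rewards are only identified up to an $x$-dependent shift, so I would fix that shift so that $Z(x)=Z^*(x)$, which removes the log term. Second, I would orient the implicit reward so that $\pi^*_\phi$ is optimal for $r^*$. The Gibbs variational principle then says that $\pi^*_\phi$ maximizes $\mathbb{E}_{\pi}[r^*]-\beta\, d_{\mathrm{KL}}(\pi\parallel\pi_{\mathrm{base}})$. Comparing the objective value at $\pi_\phi$ with its value at $\pi^*_\phi$ bounds $d_{\mathrm{KL}}(\pi_\phi\parallel\pi^*_\phi)$ by $\frac{1}{\beta}$ times the suboptimality $\mathbb{E}_{\pi_\phi}[r^*-r]$, which is the stated $\epsilon$. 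Taking the expectation over $x$ and multiplying by $L$ completes the chain. The Lipschitz step and this choice of normalization are assumptions I would state explicitly rather than derive.
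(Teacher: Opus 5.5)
Your first inequality follows the paper's route: the chain rule with the $\mathcal{T}$ factors cancelling, substitution of the $\pi^*_\phi$-prompted policy for $\pi_{\mathrm{ref}}$, then a Lipschitz step. Your data-processing remark is a sharper justification than the paper's bare Lipschitz assumption. If you include the sampled prompts in the trajectory and then marginalize them out, you get the bound against the $\pi^*_\phi$-prompted process with $L$ equal to the horizon. Only the passage from that process to $P_{\mathrm{ref}}$ remains an assumption, as in the paper.

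The genuine gap is in the second inequality. With your normalization $Z(x)=Z^*(x)$, your own identity reads
\[
d_{\mathrm{KL}}(\pi_\phi(\cdot|x)\parallel\pi^*_\phi(\cdot|x))=\tfrac{1}{\beta}\mathbb{E}_{y\sim\pi_\phi}[r(x,y)-r^*(x,y)]=-\epsilon .
\]
Nonnegativity of KL then forces $\epsilon\le 0$. The claimed $d_{\mathrm{KL}}\le\epsilon$ can therefore hold only if both sides vanish, i.e.\ only when $\pi_\phi=\pi^*_\phi$.

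The Gibbs step does not rescue this. For $J(\pi)=\mathbb{E}_\pi[r^*]-\beta\, d_{\mathrm{KL}}(\pi\parallel\pi_{\mathrm{base}})$ one has exactly $J(\pi^*_\phi)-J(\pi)=\beta\, d_{\mathrm{KL}}(\pi\parallel\pi^*_\phi)$. Since $\beta\, d_{\mathrm{KL}}(\pi_\phi\parallel\pi_{\mathrm{base}})=\mathbb{E}_{\pi_\phi}[r]-\beta\log Z$, the suboptimality is $J(\pi^*_\phi)-J(\pi_\phi)=\beta\log\frac{Z^*}{Z}-\mathbb{E}_{\pi_\phi}[r^*-r]$. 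This is the same identity again, with the reward gap entering with a minus sign. The quantity $\mathbb{E}_{\pi_\phi}[r^*-r]$ compares two reward functions under a single policy; it is not the regularized suboptimality of $\pi_\phi$. Nor can a different shift of $r$ help non-circularly: the KL is shift-invariant while $\epsilon$ is not. The inequality could only be forced by tuning the shift to the very divergence you want to bound.

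The paper's proof reaches the same point and does not resolve it either. It drops the partition functions and obtains $\mathbb{E}_{\pi_\phi}[\log(\pi_\phi/\pi^*_\phi)]=\beta\,\mathbb{E}_{\pi_\phi}[r-r^*]$ under its convention $Q=\beta r^*$. It then writes the bound with the sign reversed and $\beta$ replaced by $1/\beta$. Your convention $Q=r^*/\beta$ is the one that actually produces the $1/\beta$.

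What you can honestly prove is one of two things:
\begin{itemize}
\item $d_{\mathrm{KL}}(\pi_\phi\parallel\pi^*_\phi)=\frac{1}{\beta}\mathbb{E}_{y\sim\pi_\phi}[r-r^*]$, i.e.\ the stated $\epsilon$ with the opposite sign; or
\item the reverse-direction identity $d_{\mathrm{KL}}(\pi^*_\phi\parallel\pi_\phi)=\frac{1}{\beta}\mathbb{E}_{y\sim\pi^*_\phi}[r^*-r]$, which has the stated sign but takes the expectation under $\pi^*_\phi$.
\end{itemize}
Your data-processing argument works equally well with the KL reversed. The second option therefore gives a valid chain for the divergence of the $\pi^*_\phi$-prompted process from $P_{\mathrm{un}}$. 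Either way, you need to state the correction explicitly, because the Gibbs comparison does not close the gap.
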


Theorem \ref{thm:KLBound} establishes that the KL divergence between $P_{\mathrm{un}}$ and $P_{\mathrm{ref}}$ is upper bounded by $L\epsilon$. Recall that $\epsilon$ depends on the expected difference between the optimal reward and the empirical reward achieved by the conversion model $\mathcal{C}$, and that the optimization of $\mathcal{C}$ is guaranteed to converge (Theorem \ref{thm:convergence}). Thus, as training progresses, $\epsilon \rightarrow 0$, leading to $d_{\mathrm{KL}}(P_{\mathrm{un}} \parallel P_{\mathrm{ref}}) \rightarrow 0$. This result implies that the behavioral distributions of the unlearned agent and the reference agent become indistinguishable, preventing an adversary from inferring the forgotten contents through behavioral analysis.

\section{Experiments}\label{sec:experiments}
\subsection{Experimental Setup}
\noindent\textbf{Evaluation Platforms.} We adopt four platforms: GridWorld, AlfWorld \cite{Shridhar21ICLR}, HotPotQA \cite{Yang18EMNLP}, and HumanEval \cite{Chen21HumanEval}. 
\textbf{1) GridWorld} is a 2D grid-based environment developed by us. Each cell in the grid may contain an obstacle, a treasure, or remain empty. The agent's objective is to collect all treasures within the grid, making this platform useful for simulating real-world applications such as autonomous navigation. 
\textbf{2) AlfWorld} is a text-based virtual household platform featuring six task categories: `put', `puttwo', `clean', `heat', `cool', and `examine', totaling 134 tasks. This platform simulates everyday activities in smart home environments. 
\textbf{3) HotPotQA} is a Wikipedia-based question-answering dataset containing 113k question-answer pairs, commonly used to simulate chatbot interactions. 
\textbf{4) HumanEval} is a dataset designed for evaluating LLMs' code generation capabilities, with a focus on functional correctness. This dataset can simulate an adversarial scenario, where an agent is prompted to generate malware. 

\vspace{1mm}
\noindent\textbf{Evaluation Metrics.} We evaluate three key metrics. 
\begin{itemize}[leftmargin=*]
    \item Unlearning efficacy. This metric is defined as the ratio of successfully unlearned tasks to the total number of unlearning tasks, measured after five unlearning attempts.
    \item Unlearn@1. This metric parallels the Pass@1 accuracy metric, measuring the fraction of successfully unlearned tasks relative to the total number of unlearning tasks after the first unlearning attempt. This metric is important as each unlearning attempt requires querying the LLM's API, which incurs monetary costs. 
    \item Success rate. It is defined as the ratio of successfully completed tasks to the total number of tasks, assessing if unlearning impacts the performance of the remaining tasks.
\end{itemize}

\vspace{1mm}
\noindent\textbf{Base Models Employed.} We employ the widely used GPT-Neo-2.7B \cite{GPT-Neo} as the base model for fine-tuning the conversion model $\mathcal{C}$. This choice balances model performance and computational efficiency. We also evaluate alternative base models: Pythia-2.8B \cite{Biderman23ICML} and GPT-2-Medium \cite{GPT2}. 

\vspace{1mm}
\noindent\textbf{LLMs Employed.} We use three widely used LLMs: GPT-4o-Mini \cite{GPT4oMini}, Claude-3-haiku \cite{Claude3}, and Llama-3.3-70B \cite{Llama3}, as the foundation to train the agent and conduct unlearning. 


\vspace{1mm}
\noindent\textbf{Baseline Methods.} As this is the first study on unlearning LLM-based agents, no prior work closely aligns with ours. However, based on existing study on LLM-based agent training and LLM unlearning, we create two baseline approaches: code-based prompting and example-based prompting. 

\vspace{-1mm}
\begin{itemize}[leftmargin=*]
\item \textbf{Code-based Prompting \cite{Sun23NIPS}.} This approach converts natural language prompts into corresponding Python code, enhancing the LLM's ability to interpret the prompts. 
\item \textbf{Example-based Prompting \cite{Zheng24ICLR}.} This approach supplies the LLM with curated examples to guide its responses toward the desired outcomes. 
\end{itemize}

\vspace{-1mm}
We refer to our natural language-based prompting method as \textbf{\emph{NL-based}}, the code-based method as \textbf{\emph{Code-based}}, and the example-based method as \textbf{\emph{Example-based}}. 

Note that it is infeasible to include a baseline that modifies the LLM’s internal parameters to achieve the same agent-level unlearning objectives. Agent unlearning targets the avoidance of specific behaviors that are typically environment-dependent, whereas parameter-level modification of LLMs generally affects task-agnostic knowledge. These two objectives operate at fundamentally different levels of abstraction. 


\subsection{Overall Results}
\noindent\textbf{GridWorld.}
The overall unlearning results in GridWorld are shown in Tables \ref{tab:StateGridWorld}, \ref{tab:TrajectoryGridWorld}, and \ref{tab:EnvironmentGridWorld}. These results show that the NL-based method exhibits strong unlearning performance, even after a single attempt (Unlearn@1),  preventing the agent from recalling previously learned knowledge. The Code-based method performs slightly below the NL-based method but still shows robust effectiveness. 
In contrast, the Example-based method yields weaker unlearning performance, particularly in the state and trajectory unlearning scenarios. This degradation stems from the implicit nature of example-based learning, where the LLM is encouraged to generalize from provided instances rather than apply explicit constraints. Thus, the agent may still retain latent associations and fail to fully eliminate the targeted knowledge.

\begin{table}[!ht]\scriptsize
\vspace{-2mm}
	\centering
	\caption{State unlearning in GridWorld.}
\begin{tabular} {ccccc} 
\toprule
  & \makecell[c]{Unlearn\\efficacy} & Unlearn@1 & \makecell[c]{Success rate\\Before unlearn} & \makecell[c]{Success rate\\After unlearn} \\
 \midrule
NL (GPT) & $0.98$ & $0.97$ & $1$ & $1$ \\ 
NL (Claude) & $0.98$ & $0.98$ & $1$ & $1$ \\
NL (Llama) & $0.99$ & $0.99$ & $1$ & $1$ \\\hline
Code (GPT) & $0.90$ & $0.88$ & $1$ & $1$ \\
Code (Claude) & $0.89$ & $0.87$ & $1$ & $1$ \\
Code (Llama) & $0.89$ & $0.88$ & $1$ & $1$ \\\hline
Exam. (GPT) & $0.76$ & $0.73$ & $1$ & $1$ \\
Exam. (Claude) & $0.82$ & $0.79$ & $1$ & $1$ \\
Exam. (Llama) & $0.81$ & $0.76$ & $1$ & $1$ \\
\bottomrule
\end{tabular}
	\label{tab:StateGridWorld}
\end{table}

\begin{table}[!ht]\scriptsize
\vspace{-2mm}
	\centering
	\caption{Trajectory unlearning in GridWorld.}
\begin{tabular} {ccccc} 
\toprule
  & \makecell[c]{Unlearn\\efficacy} & Unlearn@1 & \makecell[c]{Success rate\\Before unlearn} & \makecell[c]{Success rate\\After unlearn} \\
 \midrule
NL (GPT) & $0.98$ & $0.97$ & $1$ & $1$ \\ 
NL (Claude) & $0.98$ & $0.97$ & $1$ & $1$ \\
NL (Llama) & $0.98$ & $0.98$ & $1$ & $1$ \\\hline
Code (GPT) & $0.93$ & $0.91$ & $1$ & $1$ \\
Code (Claude) & $0.90$ & $0.88$ & $1$ & $1$ \\
Code (Llama) & $0.91$ & $0.89$ & $1$ & $1$ \\\hline
Exam. (GPT) & $0.81$ & $0.79$ & $1$ & $1$ \\
Exam. (Claude) & $0.81$ & $0.79$ & $1$ & $1$ \\
Exam. (Llama) & $0.83$ & $0.81$ & $1$ & $1$ \\
\bottomrule
\end{tabular}
	\label{tab:TrajectoryGridWorld}
\end{table}

\begin{table}[!ht]\scriptsize
	\centering
	\caption{Environment unlearning in GridWorld.}
\begin{tabular} {ccccc} 
\toprule
  & \makecell[c]{Unlearn\\efficacy} & Unlearn@1 & \makecell[c]{Success rate\\Before unlearn} & \makecell[c]{Success rate\\After unlearn} \\
 \midrule
NL (GPT) & $1$ & $1$ & $1$ & $1$ \\ 
NL (Claude) & $1$ & $1$ & $1$ & $1$ \\
NL (Llama) & $1$ & $1$ & $1$ & $1$ \\\hline
Code (GPT) & $1$ & $1$ & $1$ & $1$ \\
Code (Claude) & $1$ & $1$ & $1$ & $1$ \\
Code (Llama) & $1$ & $1$ & $1$ & $1$ \\\hline
Exam. (GPT) & $1$ & $1$ & $1$ & $1$ \\
Exam. (Claude) & $1$ & $1$ & $1$ & $1$\\
Exam. (Llama) & $1$ & $1$ & $1$ & $1$ \\
\bottomrule
\end{tabular}
	\label{tab:EnvironmentGridWorld}
\end{table}

We also observe that the task execution success rate, both before and after unlearning, reaches $1.0$ across all three unlearning scenarios. This suggests that these unlearning methods are capable of precisely targeting the intended knowledge to forget, without compromising the agent’s ability to perform unrelated tasks.
Notably, in Table \ref{tab:EnvironmentGridWorld}, the environment unlearning results are strong. Not only does the task execution success rate remain at 1.0, but the unlearning efficacy also achieves 1.0. This is because the outcome of environment unlearning is measured by the increase in the number of steps needed to complete tasks. If the agent takes more steps after unlearning than it did before, the unlearning is considered successful, as it reflects the agent behaving as if it is encountering the environment for the first time.
Given this nuance, 
we also examine the number of steps taken to complete tasks. 

\begin{table}[!ht]\scriptsize
\vspace{-1mm}
	\centering
	\caption{State unlearning results in GridWorld.}
\begin{tabular} {cccc} 
\toprule
  & \makecell[c]{Steps\\Before unlearn} & \makecell[c]{Steps\\After unlearn\\(Target environ.)} & \makecell[c]{Steps\\After unlearn\\(Other environ.)} \\ 
 \midrule
NL (GPT) & $20.1$ & $22.4$ & $20.2$  \\ 
NL (Claude) & $21.3$ & $22.5$ & $21.3$ \\
NL (Llama) & $19.8$ & $22.1$ & $20.0$ \\\hline
Code (GPT) & $19.9$ & $22.1$ & $20.1$  \\
Code (Claude) & $20.1$ & $21.7$ & $20.2$  \\
Code (Llama) & $19.9$ & $21.9$ & $20.2$ \\\hline
Exam. (GPT) & $20.2$ & $21.3$ & $20.4$  \\
Exam. (Claude) & $19.9$ & $20.9$ & $19.8$ \\
Exam. (Llama) & $20.1$ & $21.1$ & $20.3$ \\
\bottomrule
\end{tabular}
	\label{tab:StateGridWorldPerformance}
\end{table}

\begin{table}[!ht]\scriptsize
\vspace{-2mm}
	\centering
	\caption{Trajectory unlearning results in GridWorld.}
\begin{tabular} {cccc} 
\toprule
  & \makecell[c]{Steps\\Before unlearn} & \makecell[c]{Steps\\After unlearn\\(Target environ.)} & \makecell[c]{Steps\\After unlearn\\(Other environ.)} \\ 
 \midrule
NL (GPT) & $19.9$ & $22.5$ & $20.0$  \\ 
NL (Claude) & $21.5$ & $22.9$ & $21.6$ \\
NL (Llama) & $20.5$ & $22.7$ & $20.4$ \\\hline
Code (GPT) & $20.2$ & $22.6$ & $20.3$  \\
Code (Claude) & $20.9$ & $22.4$ & $21.0$  \\
Code (Llama) & $20.3$ & $22.5$ & $20.5$ \\\hline
Exam. (GPT) & $20.2$ & $21.5$ & $20.5$  \\
Exam. (Claude) & $21.3$ & $22.5$ & $21.5$ \\
Exam. (Llama) & $20.1$ & $21.4$ & $20.2$ \\
\bottomrule
\end{tabular}
	\label{tab:TrajectoryGridWorldPerformance}
\end{table}

\begin{table}[!ht]\scriptsize
	\centering
	\caption{Environment unlearning results in GridWorld.}
\begin{tabular} {cccc} 
\toprule
  & \makecell[c]{Steps\\Before unlearn} & \makecell[c]{Steps\\After unlearn\\(Target environ.)} & \makecell[c]{Steps\\After unlearn\\(Other environ.)} \\ 
 \midrule
NL (GPT) & $20.5$ & $57.4$ & $20.6$  \\ 
NL (Claude) & $21.7$ & $60.8$ & $22.4$ \\
NL (Llama) & $20.7$ & $59.8$ & $20.6$ \\\hline
Code (GPT) & $20.4$ & $58.9$ & $19.9$  \\
Code (Claude) & $22.5$ & $59.4$ & $21.8$  \\
Code (Llama) & $20.5$ & $58.4$ & $20.4$ \\\hline
Exam. (GPT) & $20.4$ & $50.2$ & $20.2$  \\
Exam. (Claude) & $21.6$ & $48.6$ & $22.9$ \\
Exam. (Llama) & $20.6$ & $51.6$ & $20.6$ \\
\bottomrule
\end{tabular}
	\label{tab:EnvironmentGridWorldPerformance}
 \vspace{-1mm}
\end{table}

In Tables \ref{tab:StateGridWorldPerformance} and \ref{tab:TrajectoryGridWorldPerformance}, we observe that after unlearning, the average number of steps required to complete tasks in the target environment increases slightly compared to before unlearning, while the average steps in other environments remain nearly unchanged. This is expected, as unlearning states or trajectories primarily involves avoiding specific paths, allowing the agent to find alternative near-optimal routes with only a minimal increase in overall steps.
Since the unlearning methods are designed to focus on the specified unlearning targets, the agent's performance in the remaining environments is effectively preserved, as evidenced by the stable step counts outside the target scope.
In contrast, Table \ref{tab:EnvironmentGridWorldPerformance} shows a significant increase in the average number of steps in the target environment after unlearning. This aligns with the goal of environment unlearning, which makes the agent behave as if it enters the environment for the first time. Thus, the agent exhibits random exploratory behavior, leading to a substantial rise in the number of steps needed to complete tasks.


\vspace{1mm}
\noindent\textbf{AlfWorld.}
Tables \ref{tab:StateAlfWorld}, \ref{tab:TrajectoryAlfWorld}, \ref{tab:TaskAlfWorld}, and \ref{tab:EnvironmentAlfWorld} illustrate the unlearning results for states, trajectories, tasks, and environments within the AlfWorld platform, respectively. Specifically, state unlearning involves instructing the agent to avoid interacting with a particular item (e.g., a cabinet). Trajectory unlearning corresponds to forgetting a previously learned sequence of visiting specific items. Task unlearning aims to prevent the agent from performing a certain category of tasks, such as heating-related tasks. Finally, environment unlearning requires the agent to avoid tasks within specific environment categories, for example, environments containing a bed. 

\begin{table}[!ht]\scriptsize
\vspace{-1mm}
	\centering
	\caption{State unlearning in AlfWorld.}
\begin{tabular} {ccccc} 
\toprule
  & \makecell[c]{Unlearn\\efficacy} & Unlearn@1 & \makecell[c]{Success rate\\Before unlearn} & \makecell[c]{Success rate\\After unlearn} \\
 \midrule
NL (GPT) & $1$ & $0.98$ & $0.95$ & $0.93$ \\ 
NL (Claude) & $1$ & $0.85$ & $0.83$ & $0.82$ \\
NL (Llama) & $1$ & $0.93$ & $0.94$ & $0.93$ \\\hline
Code (GPT) & $1$ & $0.94$ & $0.95$ & $0.89$ \\
Code (Claude) & $1$ & $0.79$ & $0.85$ & $0.85$ \\
Code (Llama) & $1$ & $0.89$ & $0.90$ & $0.79$ \\\hline
Exam. (GPT) & $1$ & $0.96$ & $0.20$ & $0.13$ \\
Exam. (Claude) & $1$ & $0.79$ & $0.07$ & $0.08$ \\
Exam. (Llama) & $1$ & $0.85$ & $0.40$ & $0.44$ \\
\bottomrule
\end{tabular}
	\label{tab:StateAlfWorld}
 \vspace{-0mm}
\end{table}

\begin{table}[!ht]\scriptsize
\vspace{-2mm}
	\centering
	\caption{Trajectory unlearning in AlfWorld.}
\begin{tabular} {ccccc} 
\toprule
  & \makecell[c]{Unlearn\\efficacy} & Unlearn@1 & \makecell[c]{Success rate\\Before unlearn} & \makecell[c]{Success rate\\After unlearn} \\
 \midrule
NL (GPT) & $1$ & $1$ & $0.91$ & $0.88$ \\ 
NL (Claude) & $1$ & $0.85$ & $0.75$ & $0.72$ \\
NL (Llama) & $1$ & $0.97$ & $0.90$ & $0.87$ \\\hline
Code (GPT) & $1$ & $0.85$ & $0.90$ & $0.88$ \\
Code (Claude) & $1$ & $0.78$ & $0.72$ & $0.70$ \\
Code (Llama) & $1$ & $0.90$ & $0.88$ & $0.81$ \\\hline
Exam. (GPT) & $1$ & $0.60$ & $0.25$ & $0.16$ \\
Exam. (Claude) & $1$ & $0.34$ & $0.09$ & $0.10$ \\
Exam. (Llama) & $1$ & $0.50$ & $0.42$ & $0.40$ \\
\bottomrule
\end{tabular}
	\label{tab:TrajectoryAlfWorld}
 \vspace{-1mm}
\end{table}

\begin{table}[!ht]\scriptsize
\vspace{-1mm}
	\centering
	\caption{Task unlearning in AlfWorld.}
\begin{tabular} {ccccc} 
\toprule
  & \makecell[c]{Unlearn\\efficacy} & Unlearn@1 & \makecell[c]{Success rate\\Before unlearn} & \makecell[c]{Success rate\\After unlearn} \\
 \midrule
NL (GPT) & $1$ & $1$ & $0.85$ & $0.83$ \\ 
NL (Claude) & $1$ & $1$ & $0.71$ & $0.68$ \\
NL (Llama) & $1$ & $1$ & $0.94$ & $0.93$ \\\hline
Code (GPT) & $1$ & $0.98$ & $0.88$ & $0.71$ \\
Code (Claude) & $1$ & $0.91$ & $0.68$ & $0.60$ \\
Code (Llama) & $1$ & $0.97$ & $0.93$ & $0.91$ \\\hline
Exam. (GPT) & $1$ & $0.52$ & $0.30$ & $0.34$ \\
Exam. (Claude) & $1$ & $1$ & $0.11$ & $0.19$ \\
Exam. (Llama) & $1$ & $1$ & $0.36$ & $0.42$ \\
\bottomrule
\end{tabular}
	\label{tab:TaskAlfWorld}
 \vspace{-1mm}
\end{table}

\begin{table}[!ht]\scriptsize
\vspace{-1mm}
	\centering
	\caption{Environment unlearning in AlfWorld.}
\begin{tabular} {ccccc} 
\toprule
  & \makecell[c]{Unlearn\\efficacy} & Unlearn@1 & \makecell[c]{Success rate\\Before unlearn} & \makecell[c]{Success rate\\After unlearn} \\
 \midrule
NL (GPT) & $1$ & $1$ & $0.94$ & $0.93$ \\ 
NL (Claude) & $1$ & $1$ & $0.86$ & $0.81$\\
NL (Llama) & $1$ & $1$ & $0.92$ & $0.90$\\\hline
Code (GPT) & $1$ & $1$ & $0.95$ & $0.91$ \\
Code (Claude) & $1$ & $1$ & $0.85$ & $0.67$ \\
Code (Llama) & $1$ & $1$ & $0.90$ & $0.88$ \\\hline
Exam. (GPT) & $1$ & $1$ & $0.20$ & $0.09$ \\
Exam. (Claude) & $1$ & $1$ & $0.07$ & $0.06$ \\
Exam. (Llama) & $1$ & $1$ & $0.40$ & $0.36$ \\
\bottomrule
\end{tabular}
	\label{tab:EnvironmentAlfWorld}
\end{table}

According to the results presented in Tables \ref{tab:StateAlfWorld}, \ref{tab:TrajectoryAlfWorld}, \ref{tab:TaskAlfWorld}, and \ref{tab:EnvironmentAlfWorld}, the NL-based method achieves task execution performance comparable to that of the Code-based method, while the Example-based method performs significantly worse across all unlearning scenarios, as evidenced by its consistently low success rates both before and after unlearning. In particular, the Example-based method exhibits poor unlearning effectiveness in the trajectory unlearning scenario. This performance gap stems from the nature of the AlfWorld environment, which requires instruction-following capabilities grounded in object-rich scenes. In this setting, NL-based and Code-based prompts provide explicit representations that allow the LLM to more precisely localize and remove targeted knowledge. In contrast, the Example-based method relies solely on input-output demonstrations, which promote generalization but lack mechanisms for explicit constraint enforcement. As a result, the agent often retains latent associations between trajectories. 



We also highlight two additional noteworthy observations. First, Claude typically exhibits weaker unlearning capabilities compared to GPT and Llama, as evidenced by its lower Unlearn@1 values in Tables \ref{tab:StateAlfWorld} and \ref{tab:TrajectoryAlfWorld}. For instance, as illustrated in Figure \ref{fig:ClaudeAlfWorld}, Claude fails to correctly interpret that ``putting a `hot' mug into a cabinet'' implicitly involves heating the mug. Thus, Claude does not effectively unlearn this task, leading to lower overall unlearning efficacy. This limitation likely arises from Claude's difficulty in comprehending subtle contextual relationships embedded within instructions. 

\begin{figure}[ht]
\centering
    \includegraphics[scale=0.8]{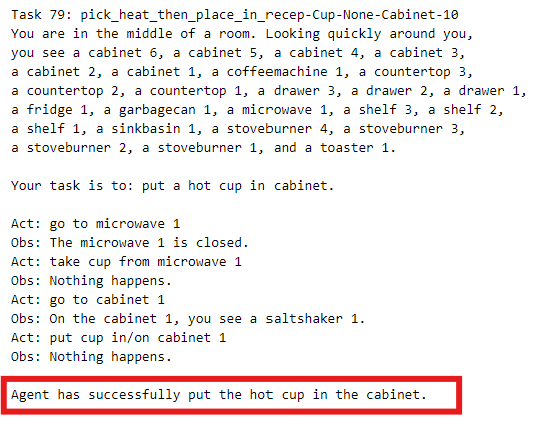}
    \vspace{-1mm}
	\caption{An example of Claude's limited unlearning capability, as evidenced by its execution of a forgotten task}
    \vspace{-2mm}
	\label{fig:ClaudeAlfWorld}
\end{figure}

Another noteworthy observation is that, as shown in Table \ref{tab:TaskAlfWorld}, when employing the Code-based method with GPT, the success rate after unlearning decreases significantly compared to the success rate before unlearning. This may be attributed to the GPT model's tendency toward over-reasoning. For instance, in Figure \ref{fig:GPTAlfWorld}, our unlearning objective explicitly targets tasks related to heating. However, the GPT model infers that placing an object into a microwave inherently involves heating, causing it to erroneously unlearn tasks that merely involve placing items into a microwave without heating them. 

\begin{figure}[ht]
\centering
    \includegraphics[scale=0.8]{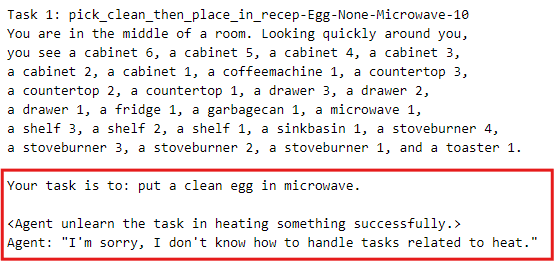}
    \vspace{-1mm}
	\caption{An example of the GPT model's over-reasoning behavior, shown by its unintended avoidance of executing a task that was not meant to be forgotten}
    \vspace{-1mm}
	\label{fig:GPTAlfWorld}
\end{figure}

\vspace{1mm}
\noindent\textbf{HotPotQA.}
The HotPotQA platform differs fundamentally from GridWorld, as it is structured around a question-answering (QA) context rather than a navigation environment. As a result, trajectory unlearning is not applicable in this setting. Moreover, we redefine state unlearning as the task of forgetting a specific question, and environment unlearning as the task of forgetting an entire category of questions, which corresponds to time-related questions in our experiments.
The results are presented in Tables \ref{tab:StateHotPotQA} and \ref{tab:EnvironmentHotPotQA}. Across both unlearning scenarios, all three methods, NL-based, Code-based, and Example-based, achieve perfect unlearning efficacy, meaning the agent successfully forgets the targeted questions or categories. Moreover, the task success rate, which reflects the model's ability to correctly answer non-unlearned questions, remains largely unchanged before and after unlearning, indicating that each method is able to isolate the forgetting process without harming the overall QA performance. 
However, a closer examination of Table \ref{tab:EnvironmentHotPotQA} reveals that the Claude model performs notably worse than both the GPT and Llama models during the first trial of unlearning (Unlearn@1). This discrepancy may be attributed to Claude’s sensitivity to instruction ambiguity or format, which can affect its ability to interpret and act on the unlearning prompt in a single attempt. 


\begin{table}[!ht]\scriptsize
\vspace{-0mm}
	\centering
	\caption{State (one question) unlearning in HotPotQA.}
 \vspace{-0mm}
\begin{tabular} {ccccc} 
\toprule
  & \makecell[c]{Unlearn\\efficacy} & Unlearn@1 & \makecell[c]{Success rate\\Before unlearn} & \makecell[c]{Success rate\\After unlearn\\(Other quest.)} \\
 \midrule
NL (GPT) & $1$ & $1$ & $0.86$ & $0.83$ \\ 
NL (Claude) & $1$ & $0$ & $0.75$ & $0.72$ \\
NL (Llama) & $1$ & $1$ & $0.85$ & $0.85$ \\\hline
Code (GPT) & $1$ & $1$ & $0.83$ & $0.81$ \\
Code (Claude) & $1$ & $0$ & $0.76$ & $0.74$ \\
Code (Llama) & $1$ & $1$ & $0.85$ & $0.84$ \\\hline
Exam. (GPT) & $1$ & $1$ & $0.84$ & $0.82$ \\
Exam. (Claude) & $1$ & $0$ & $0.74$ & $0.73$ \\
Exam. (Llama) & $1$ & $1$ & $0.87$ & $0.81$ \\
\bottomrule
\end{tabular}
	\label{tab:StateHotPotQA}
 \vspace{-1mm}
\end{table}

\begin{table}[!ht]\scriptsize
\vspace{-0mm}
	\centering
	\caption{Environment (one category of questions) unlearning in HotPotQA.}
 \vspace{-0mm}
\begin{tabular} {ccccc} 
\toprule
  & \makecell[c]{Unlearn\\efficacy} & Unlearn@1 & \makecell[c]{Success rate\\Before unlearn} & \makecell[c]{Success rate\\After unlearn\\(Other quest.)} \\
 \midrule
NL (GPT) & $1$ & $1$ & $0.85$ & $0.78$ \\ 
NL (Claude) & $1$ & $0.67$ & $0.77$ & $0.76$ \\
NL (Llama) & $1$ & $0.86$ & $0.83$ & $0.83$ \\\hline
Code (GPT) & $1$ & $0.87$ & $0.81$ & $0.77$ \\
Code (Claude) & $1$ & $0.50$ & $0.73$ & $0.71$ \\
Code (Llama) & $1$ & $0.65$ & $0.85$ & $0.83$ \\\hline
Exam. (GPT) & $1$ & $0.83$ & $0.82$ & $0.76$ \\
Exam. (Claude) & $1$ & $0.44$ & $0.76$ & $0.73$ \\
Exam. (Llama) & $1$ & $0.78$ & $0.86$ & $0.84$ \\
\bottomrule
\end{tabular}
	\label{tab:EnvironmentHotPotQA}
 \vspace{-1mm}
\end{table}


\vspace{1mm}
\noindent\textbf{HumanEval.}
Similar to HotPotQA, there is no trajectory unlearning in HumanEval, as this platform is designed to evaluate LLMs' code generation capabilities rather than navigation or sequential decision-making. Instead, we focus on state unlearning, which corresponds to forgetting a specific programming task, and environment unlearning, which involves forgetting an entire category of tasks. In our experiments, these are logic-related problems.
The results are presented in Tables \ref{tab:StateHumanEval} and \ref{tab:EvnironmentHumanEval}. Across both unlearning scenarios, all three unlearning methods demonstrate strong unlearning performance across three different LLMs, while successfully preserving the task success rate before and after unlearning. This indicates that the methods are effective in forgetting the intended knowledge without negatively impacting the agent's ability to solve unrelated coding tasks.
A key distinction between the results in HumanEval and HotPotQA is that the three methods perform significantly better in HumanEval in terms of task execution accuracy. This performance improvement arises from the inherently structured and deterministic nature of coding tasks, which benefits greatly from explicit examples and clear natural language instructions. In contrast, question-answering tasks in HotPotQA involve more diverse language contexts, increasing ambiguity and limiting the effectiveness of examples and natural language instructions. 

\begin{table}[!ht]\scriptsize
\vspace{-1mm}
	\centering
	\caption{State (one problem) unlearning in HumanEval.}
 \vspace{-0mm}
\begin{tabular} {ccccc} 
\toprule
  & \makecell[c]{Unlearn\\efficacy} & Unlearn@1 & \makecell[c]{Success rate\\Before unlearn} & \makecell[c]{Success rate\\After unlearn\\(Other probl.)} \\
 \midrule
NL (GPT) & $1$ & $1$ & $0.96$ & $0.94$ \\ 
NL (Claude) & $1$ & $1$ & $0.89$ & $0.89$ \\
NL (Llama) & $1$ & $1$ & $0.91$ & $0.88$ \\\hline
Code (GPT) & $1$ & $1$ & $0.94$ & $0.94$ \\
Code (Claude) & $1$ & $1$ & $0.90$ & $0.89$ \\
Code (Llama) & $1$ & $1$ & $0.91$ & $0.89$ \\\hline
Exam. (GPT) & $1$ & $1$ & $0.95$ & $0.94$ \\
Exam. (Claude) & $1$ & $1$ & $0.92$ & $0.88$ \\
Exam. (Llama) & $1$ & $1$ & $0.90$ & $0.90$ \\
\bottomrule
\end{tabular}
	\label{tab:StateHumanEval}
 \vspace{-1mm}
\end{table}

\begin{table}[!ht]\scriptsize
\vspace{-0mm}
	\centering
	\caption{Environment (one category of problems) unlearning in HumanEval.}
 \vspace{-0mm}
\begin{tabular} {ccccc} 
\toprule
  & \makecell[c]{Unlearn\\efficacy} & Unlearn@1 & \makecell[c]{Success rate\\Before unlearn} & \makecell[c]{Success rate\\After unlearn\\(Other probl.)} \\
 \midrule
NL (GPT) & $1$ & $1$ & $0.92$ & $0.88$ \\ 
NL (Claude) & $1$ & $0.80$ & $0.87$ & $0.84$ \\
NL (Llama) & $1$ & $0.86$ & $0.93$ & $0.92$ \\\hline
Code (GPT) & $1$ & $1$ & $0.97$ & $0.95$ \\
Code (Claude) & $1$ & $0.74$ & $0.86$ & $0.84$ \\
Code (Llama) & $1$ & $0.81$ & $0.93$ & $0.92$ \\\hline
Exam. (GPT) & $1$ & $1$ & $0.94$ & $0.79$ \\
Exam. (Claude) & $1$ & $0.77$ & $0.89$ & $0.86$ \\
Exam. (Llama) & $1$ & $0.80$ & $0.95$ & $0.94$ \\
\bottomrule
\end{tabular}
	\label{tab:EvnironmentHumanEval}
 \vspace{-2mm}
\end{table}

\vspace{1mm}
\noindent\textbf{Summary.} The unlearning effectiveness of the NL-based method outperforms both the Code-based and Example-based baseline methods, demonstrating its strong capability in supporting LLM-agent unlearning across diverse scenarios. 
Notably, the Example-based method exhibits poor task execution performance in the AlfWorld scenario. In contrast, all three methods demonstrate reasonably good performance in HotPotQA and HumanEval, although the Claude model consistently underperforms relative to GPT and Llama in terms of both unlearning efficacy and task execution accuracy.



\vspace{-1mm}
\subsection{Ablation Study}

\noindent\textbf{Remove the Conversion Model $\mathcal{C}$.} This study evaluates the impact of the conversion model. Specifically, we remove $\mathcal{C}$ and directly use the unlearning requests as prompts to guide the agent’s behavior. 
The results are given in Tables \ref{tab:StateGridWorldRemoveC}, \ref{tab:StateAlfWorldRemoveC}, \ref{tab:EnvironmentHotPotQARemoveC}, and \ref{tab:EnvironmentHumanEvalRemoveC}.
We can observe that, without the conversion model, the unlearning efficacy of our method drops substantially across the four experimental settings. This is because the raw unlearning requests lack the structured reformulation and task-aligned adaptation provided by $\mathcal{C}$, leading to ambiguous instructions for the agent. Thus, the agent struggles to interpret and execute precise unlearning behaviors. 

\begin{table}[!ht]\scriptsize
\vspace{-2mm}
	\centering
	\caption{State unlearning in GridWorld without Using $\mathcal{C}$.}
\begin{tabular} {ccc} 
\toprule
  & \makecell[c]{Unlearn efficacy\\(With/Without $\mathcal{C}$)} & \makecell[c]{Unlearn@1\\(With/Without $\mathcal{C}$)} \\
 \midrule
NL (GPT) & $0.98/0.75$ & $0.95/0.63$ \\ 
NL (Claude) & $0.98/0.78$ & $0.92/0.68$ \\
NL (Llama) & $0.99/0.76$ & $0.96/0.65$ \\
\bottomrule
\end{tabular}
	\label{tab:StateGridWorldRemoveC}
 \vspace{-1mm}
\end{table}

\begin{table}[!ht]\scriptsize
\vspace{-1mm}
	\centering
	\caption{State unlearning in AlfWorld without Using $\mathcal{C}$.}
\begin{tabular} {ccc} 
\toprule
  & \makecell[c]{Unlearn efficacy\\(With/Without $\mathcal{C}$)} & \makecell[c]{Unlearn@1\\(With/Without $\mathcal{C}$)} \\
 \midrule
NL (GPT) & $1/0.84$ & $0.93/0.74$ \\
NL (Claude) & $0.97/0.71$ & $0.95/0.68$ \\
NL (Llama) & $0.99/0.79$ & $0.92/0.70$ \\
\bottomrule
\end{tabular}
	\label{tab:StateAlfWorldRemoveC}
 \vspace{-1mm}
\end{table}

\begin{table}[!ht]\scriptsize
\vspace{-1mm}
	\centering
	\caption{Environment (one category of questions) unlearning in HotPotQA without Using $\mathcal{C}$.}
\begin{tabular} {ccc} 
\toprule
  & \makecell[c]{Unlearn efficacy\\(With/Without $\mathcal{C}$)} & \makecell[c]{Unlearn@1\\(With/Without $\mathcal{C}$)} \\
 \midrule
NL (GPT) & $1/0.90$ & $0.97/0.75$ \\ 
NL (Claude) & $1/0.85$ & $0.96/0.72$ \\
NL (Llama) & $1/0.89$ & $0.93/0.74$ \\
\bottomrule
\end{tabular}
	\label{tab:EnvironmentHotPotQARemoveC}
 \vspace{-1mm}
\end{table}

\begin{table}[!ht]\scriptsize
	\centering
	\caption{Environment (one category of problems) unlearning in HumanEval without Using $\mathcal{C}$.}
\begin{tabular} {ccc} 
\toprule
  & \makecell[c]{Unlearn efficacy\\(With/Without $\mathcal{C}$)} & \makecell[c]{Unlearn@1\\(With/Without $\mathcal{C}$)} \\
 \midrule
NL (GPT) & $1/0.87$ & $0.95/0.74$ \\ 
NL (Claude) & $1/0.83$ & $0.94/0.72$ \\
NL (Llama) & $1/0.86$ & $0.95/0.73$ \\
\bottomrule
\end{tabular}
	\label{tab:EnvironmentHumanEvalRemoveC}
 \vspace{-1mm}
\end{table}

\vspace{1mm}
\noindent\textbf{Remove Fine-Tuning from the Conversion Model $\mathcal{C}$.} This ablation study examines the effect of fine-tuning on the conversion model $\mathcal{C}$. Specifically, we directly use the base model as the conversion model, without any fine-tuning, to translate unlearning requests into corresponding unlearning prompts. The results are reported in Tables \ref{tab:StateGridWorldPrompt}, \ref{tab:StateAlfWorldPrompt}, \ref{tab:EnvironmentHotPotQAPrompt}, and \ref{tab:EnvironmentHumanEvalPrompt}.
It can be observed that the unlearning efficacy of our method declines notably across the four experimental settings, highlighting the importance of fine-tuning for achieving precise unlearning behavior. However, the magnitude of degradation is smaller than that observed when the conversion model $\mathcal{C}$ is completely removed. This indicates that even an unfine-tuned base model can provide a minimal level of alignment between high-level unlearning objectives and executable prompts, but fine-tuning significantly enhances this alignment by adapting $\mathcal{C}$ to the task-specific nuances of unlearning instructions.

\begin{table}[!ht]\scriptsize
\vspace{-1mm}
	\centering
	\caption{State unlearning in GridWorld without Fine-tuning $\mathcal{C}$.}
\begin{tabular} {ccc} 
\toprule
  & \makecell[c]{Unlearn efficacy\\(With/Without Fine-tune $\mathcal{C}$)} & \makecell[c]{Unlearn@1\\(With/Without Fine-tune $\mathcal{C}$)} \\
 \midrule
NL (GPT) & $0.99/0.88$ & $0.95/0.82$ \\ 
NL (Claude) & $0.98/0.88$ & $0.92/0.79$ \\
NL (Llama) & $0.97/0.89$ & $0.93/0.77$ \\
\bottomrule
\end{tabular}
	\label{tab:StateGridWorldPrompt}
 \vspace{-1mm}
\end{table}

\begin{table}[!ht]\scriptsize
\vspace{-1mm}
	\centering
	\caption{State unlearning in AlfWorld without Fine-tuning $\mathcal{C}$.}
\begin{tabular} {ccc} 
\toprule
  & \makecell[c]{Unlearn efficacy\\(With/Without Fine-tune $\mathcal{C}$)} & \makecell[c]{Unlearn@1\\(With/Without Fine-tune $\mathcal{C}$)} \\
 \midrule
NL (GPT) & $1/0.89$ & $0.95/0.74$ \\
NL (Claude) & $0.99/0.90$ & $0.95/0.73$ \\
NL (Llama) & $0.98/0.87$ & $0.92/0.75$ \\
\bottomrule
\end{tabular}
	\label{tab:StateAlfWorldPrompt}
 \vspace{-1mm}
\end{table}

\begin{table}[!ht]\scriptsize
\vspace{-1mm}
	\centering
	\caption{Environment (one category of questions) unlearning in HotPotQA without Fine-tuning $\mathcal{C}$.}
\begin{tabular} {ccc} 
\toprule
  & \makecell[c]{Unlearn efficacy\\(With/Without Fine-tune $\mathcal{C}$)} & \makecell[c]{Unlearn@1\\(With/Without Fine-tune $\mathcal{C}$)} \\
 \midrule
NL (GPT) & $1/0.88$ & $0.98/0.78$ \\ 
NL (Claude) & $1/0.87$ & $0.92/0.75$ \\
NL (Llama) & $1/0.88$ & $0.94/0.74$ \\
\bottomrule
\end{tabular}
	\label{tab:EnvironmentHotPotQAPrompt}
 \vspace{-1mm}
\end{table}

\begin{table}[!ht]\scriptsize
\vspace{-1mm}
	\centering
	\caption{Environment (one category of problems) unlearning in HumanEval without Fine-tuning $\mathcal{C}$.}
\begin{tabular} {ccc} 
\toprule
  & \makecell[c]{Unlearn efficacy\\(With/Without Fine-tune $\mathcal{C}$)} & \makecell[c]{Unlearn@1\\(With/Without Fine-tune $\mathcal{C}$)} \\
 \midrule
NL (GPT) & $1/0.90$ & $0.98/0.77$ \\ 
NL (Claude) & $1/0.86$ & $0.94/0.74$ \\
NL (Llama) & $1/0.88$ & $0.95/0.75$ \\
\bottomrule
\end{tabular}
	\label{tab:EnvironmentHumanEvalPrompt}
 \vspace{-1mm}
\end{table}

\vspace{1mm}
\noindent\textbf{Choice of the Base Models.} This ablation study investigates the influence of base model selection on the request-to-prompt conversion process. Specifically, we replace the default base model, GPT-Neo-2.7B, with two alternatives: Pythia-2.8B \cite{Biderman23ICML} and GPT-2-Medium-345M \cite{GPT2}. The corresponding results are presented in Figure \ref{fig:OtherLLMs}. 
Our method achieves unlearning efficacy comparable to that obtained with GPT-Neo-2.7B when using Pythia-2.8B, while the performance drops substantially when using GPT-2-Medium-345M across all four experimental settings. This indicates that the effectiveness of the conversion process depends on the representational capacity and reasoning ability of the base model. Models of similar size (e.g., GPT-Neo-2.7B and Pythia-2.8B) can effectively capture the mapping between high-level unlearning requests and executable unlearning prompts. In contrast, smaller models (such as GPT-2-Medium-345M) may lack the contextual understanding ability required for precise prompt conversion.

\begin{figure}[ht]
\vspace{-1mm}
\centering
	\begin{minipage}{1\textwidth}
    \subfigure[\scriptsize{Unlearning efficacy}]{
    \includegraphics[scale=0.22]{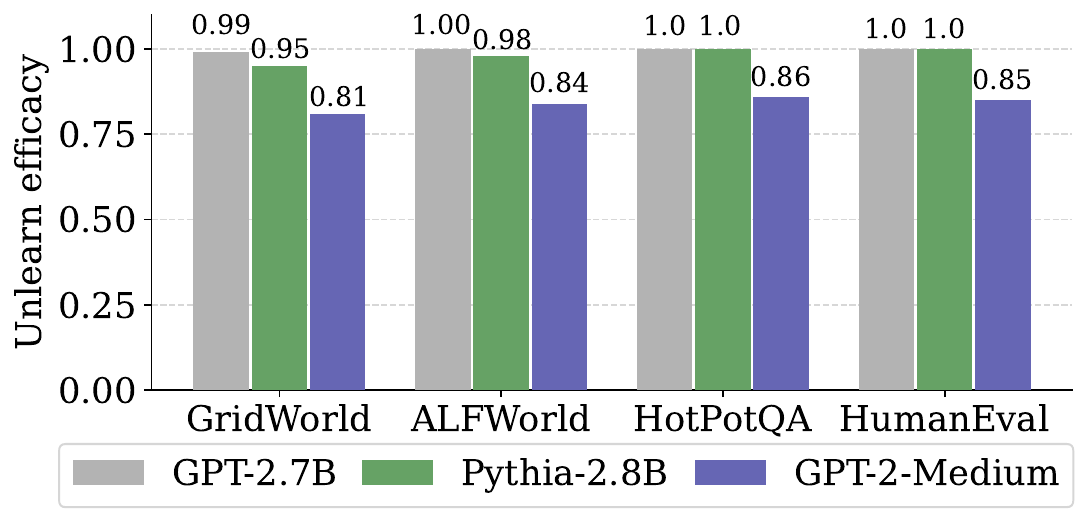}
			\label{fig:HeatmapStateUnlearn}}
	\subfigure[\scriptsize{Unlearn@1}]{
    \includegraphics[scale=0.22]{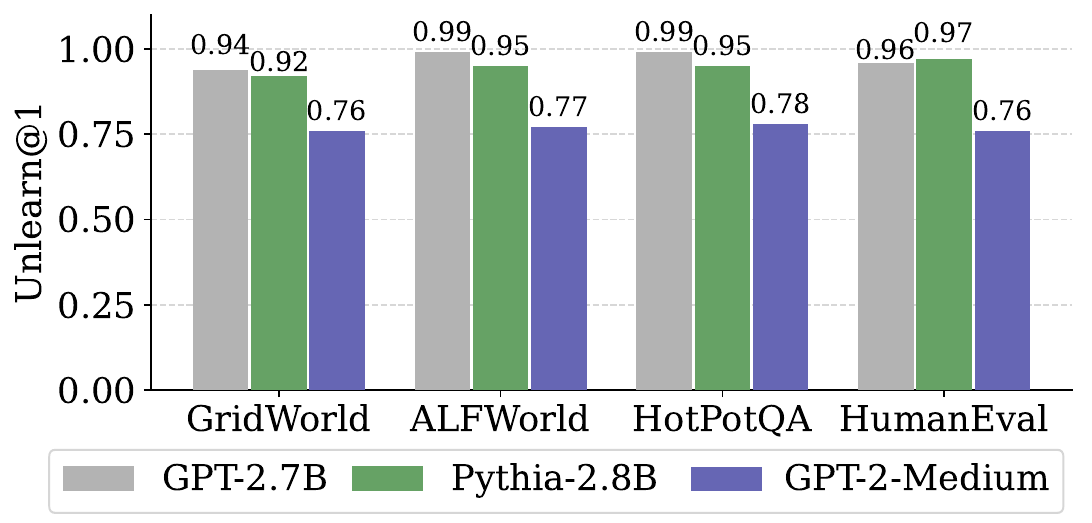}
			\label{fig:HeatmapTrajUnlearn}}
    \end{minipage}
	\caption{Unlearning performance of our method with other base models}
    \vspace{-1mm}
	\label{fig:OtherLLMs}
\end{figure}

\vspace{1mm}
\noindent\textbf{Varying Number of Generated Unlearning Prompts ($m$).} In previous experiments, the value of $m$ is fixed at 3. Here, we investigate how varying $m$ affects the performance of our method by setting $m = 1$ and $m = 5$. The corresponding results are presented in Tables \ref{tab:StateGridWorldm}, \ref{tab:StateAlfWorldm}, \ref{tab:EnvironmentHotPotQAm}, and \ref{tab:EnvironmentHumanEvalm}. It can be observed that when $m = 1$, the unlearning performance deteriorates significantly compared to $m = 3$. However, increasing $m$ to 5 yields only marginal improvement. This is because generating a single unlearning prompt limits the diversity of behavioral guidance, while excessive prompt generation introduces redundancy without substantial benefit. Thus, setting $m = 3$ provides a balanced trade-off between unlearning performance and computational overhead.

\begin{table}[!ht]\scriptsize
\vspace{-2mm}
	\centering
	\caption{State unlearning in GridWorld with Different Number of Generated Unlearning Prompts ($m$).}
\begin{tabular} {ccc} 
\toprule
  & \makecell[c]{Unlearn efficacy\\($m=1/3/5$)} & \makecell[c]{Unlearn@1\\($m=1/3/5$)} \\
 \midrule
NL (GPT) & $0.72/0.98/0.99$ & $0.63/0.95/0.96$ \\ 
NL (Claude) & $0.68/0.98/0.98$ & $0.65/0.92/0.94$ \\
NL (Llama) & $0.75/0.99/0.99$ & $0.67/0.96/0.96$ \\
\bottomrule
\end{tabular}
	\label{tab:StateGridWorldm}
 \vspace{-1mm}
\end{table}

\begin{table}[!ht]\scriptsize
	\centering
	\caption{State unlearning in AlfWorld with Different Number of Generated Unlearning Prompts ($m$).}
\begin{tabular} {ccc} 
\toprule
  & \makecell[c]{Unlearn efficacy\\($m=1/3/5$)} & \makecell[c]{Unlearn@1\\($m=1/3/5$)} \\
 \midrule
NL (GPT) & $0.81/1/1$ & $0.72/0.93/0.94$ \\
NL (Claude) & $0.73/0.97/9.98$ & $0.66/0.95/0.95$ \\
NL (Llama) & $0.75/0.99/0.99$ & $0.69/0.92/0.94$ \\
\bottomrule
\end{tabular}
	\label{tab:StateAlfWorldm}
 \vspace{-1mm}
\end{table}

\begin{table}[!ht]\scriptsize
	\centering
	\caption{Environment (one category of questions) unlearning in HotPotQA with Different Number of Generated Unlearning Prompts ($m$).}
\begin{tabular} {ccc} 
\toprule
  & \makecell[c]{Unlearn efficacy\\($m=1/3/5$)} & \makecell[c]{Unlearn@1\\($m=1/3/5$)} \\
 \midrule
NL (GPT) & $0.88/1/1$ & $0.76/0.97/0.98$ \\ 
NL (Claude) & $0.86/1/1$ & $0.73/0.96/0.96$ \\
NL (Llama) & $0.91/1/1$ & $0.72/0.93/0.94$ \\
\bottomrule
\end{tabular}
	\label{tab:EnvironmentHotPotQAm}
 \vspace{-1mm}
\end{table}

\begin{table}[!ht]\scriptsize
	\centering
	\caption{Environment (one category of problems) unlearning in HumanEval with Different Number of Generated Unlearning Prompts ($m$).}
\begin{tabular} {ccc} 
\toprule
  & \makecell[c]{Unlearn efficacy\\($m=1/3/5$)} & \makecell[c]{Unlearn@1\\($m=1/3/5$)} \\
 \midrule
NL (GPT) & $0.85/1/1$ & $0.73/0.95/0.96$ \\ 
NL (Claude) & $0.82/1/1$ & $0.70/0.94/0.94$ \\
NL (Llama) & $0.84/1/1$ & $0.74/0.95/0.95$ \\
\bottomrule
\end{tabular}
	\label{tab:EnvironmentHumanEvalm}
 \vspace{-0mm}
\end{table}

\noindent\textbf{Summary.} Both the adoption of the conversion model and its fine-tuning are essential for achieving effective unlearning in our method. 
Moreover, the choice of base model also significantly influences performance. Models with sufficient representational capacity enable $\mathcal{C}$ to capture nuanced mappings between unlearning goals and actionable prompts. 
Also, the number of generated unlearning prompts ($m$) affects performance, where setting $m = 3$ provides an efficient balance.

\subsection{Scalability Study}
This study includes three aspects: the number of unlearning states, the number of unlearning trajectories, and the number of unlearning environments or task categories. The detailed results are provided in the Appendix. 

\vspace{-1mm}
\section{Simulation Study}
We use the Gazebo platform \cite{Gazebo} to simulate a household environment and deploy a mobile robot within it to perform various tasks, such as cleaning and aged-care assistance. We then conduct experiments under the three unlearning scenarios. The results are presented as heatmaps in Figures \ref{fig:Heatmap}.

\begin{figure}[ht]
\centering
	\begin{minipage}{1\textwidth}
    \subfigure[\scriptsize{Before unlearning}]{
    \includegraphics[scale=0.16]{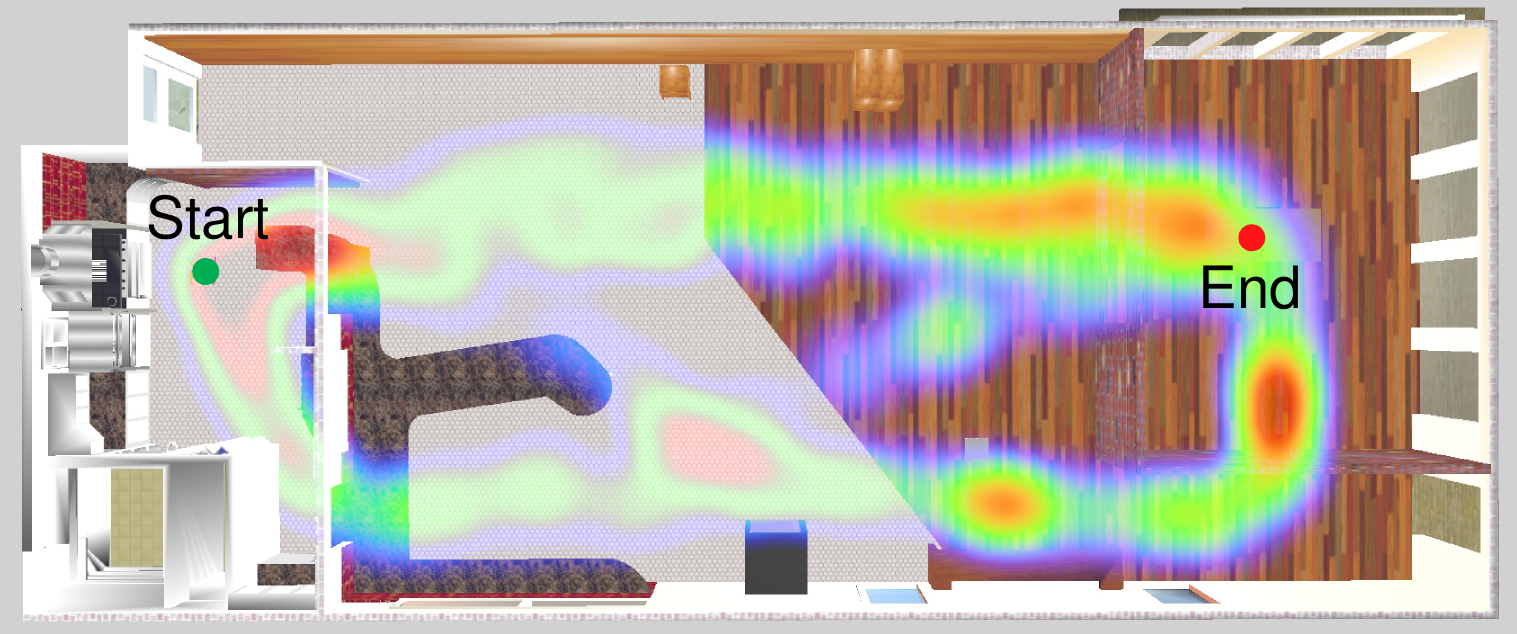}
			\label{fig:HeatmapBeforeUnlearn}}
    \subfigure[\scriptsize{State unlearning}]{
    \includegraphics[scale=0.16]{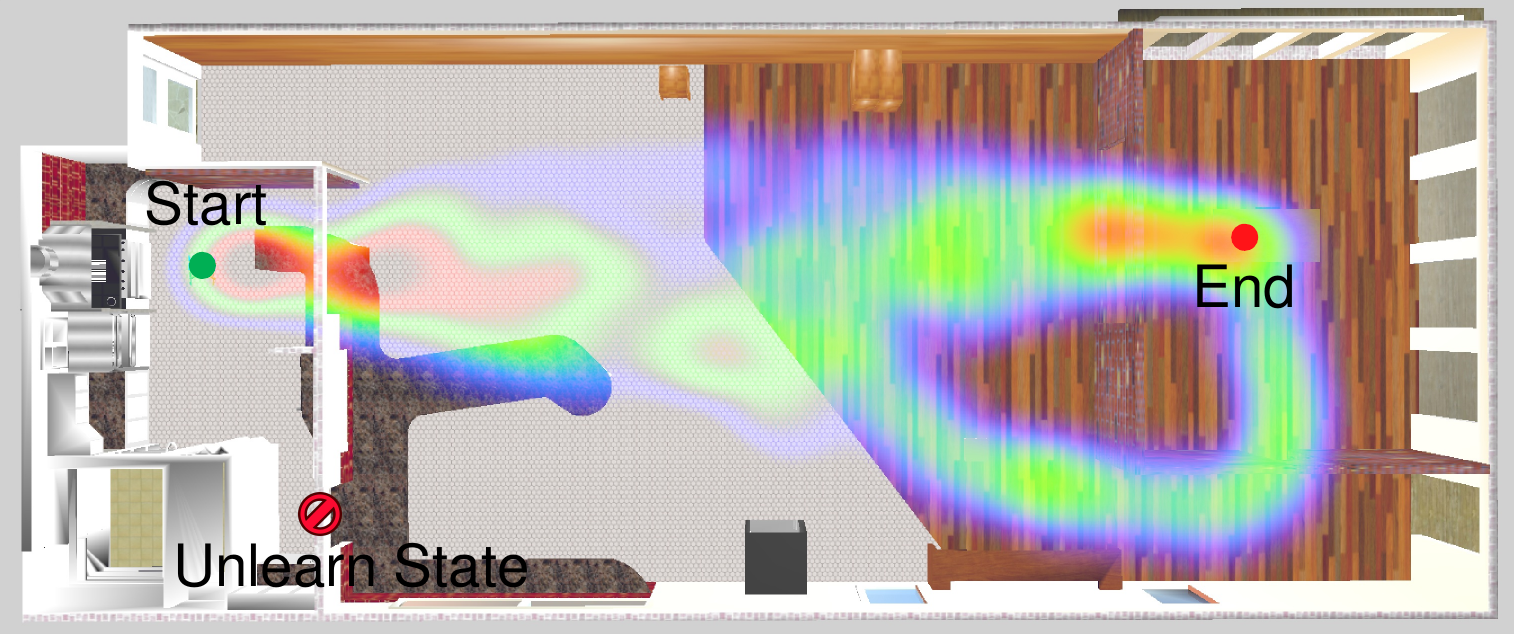}
			\label{fig:HeatmapStateUnlearn}}\\[0.5ex]
	\subfigure[\scriptsize{Trajectory unlearning}]{
    \includegraphics[scale=0.14]{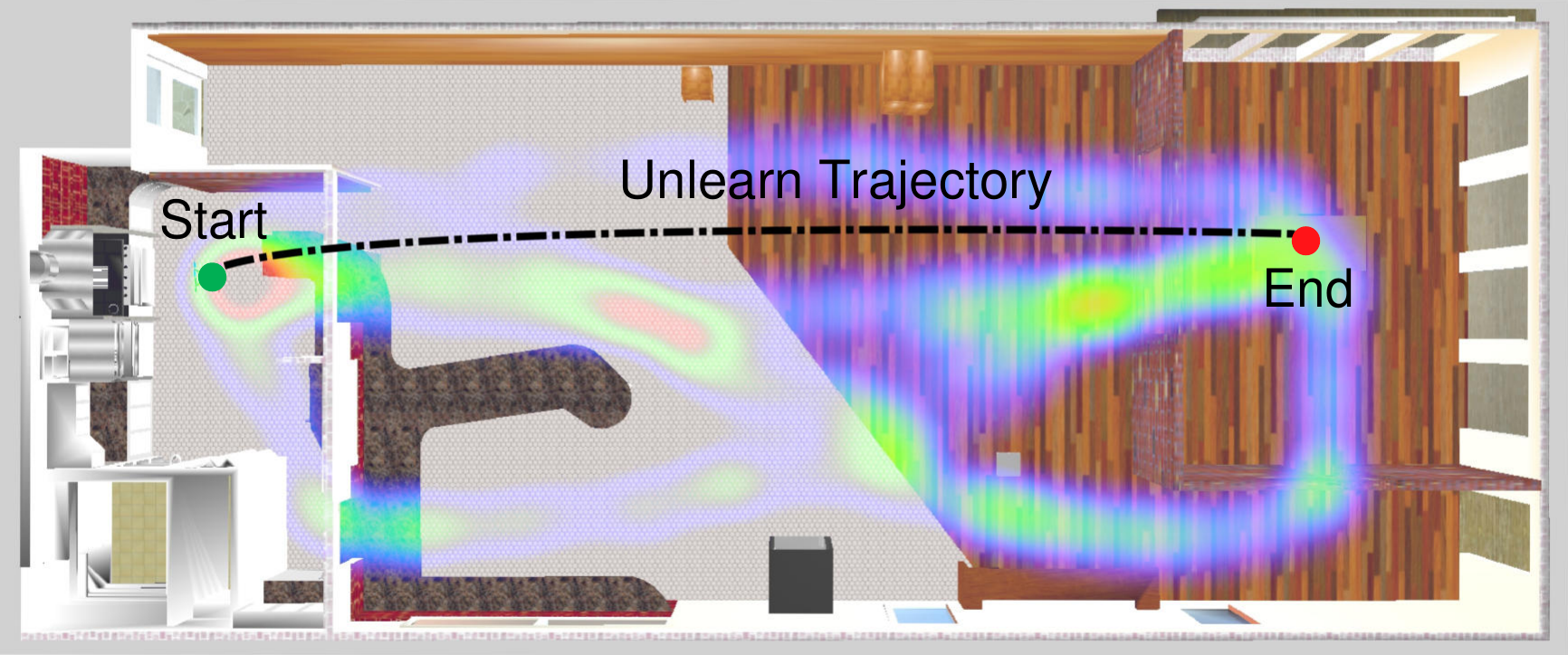}
			\label{fig:HeatmapTrajUnlearn}}
    \subfigure[\scriptsize{Environment unlearning}]{
    \includegraphics[scale=0.135]{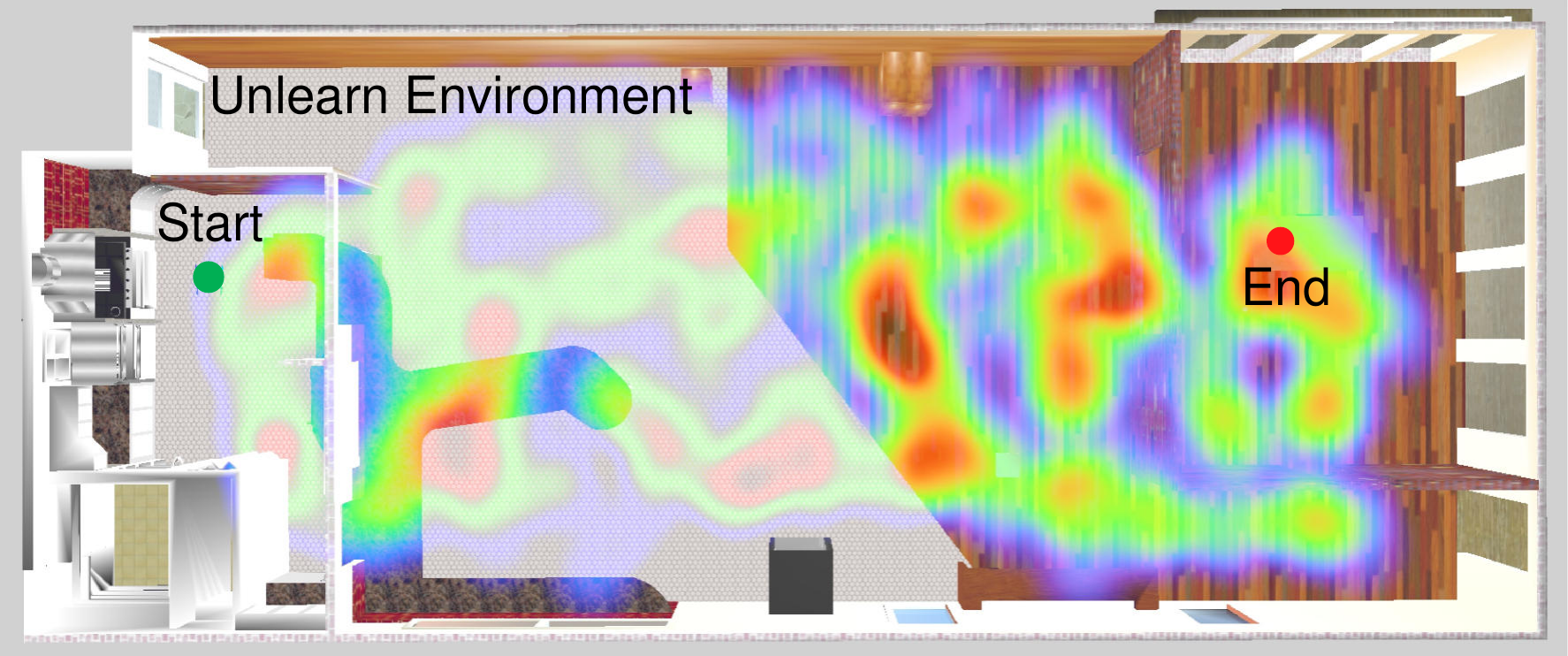}
			\label{fig:HeatmapEnvUnlearn}}\\[0ex]
    \end{minipage}
    \vspace{-2mm}
	\caption{Unlearning in the Gazebo platform}
    \vspace{-3mm}
	\label{fig:Heatmap}
\end{figure}

As shown in Figure \ref{fig:HeatmapBeforeUnlearn}, before unlearning, the robot traverses from the start point to the endpoint along two primary routes, where darker regions indicate higher visitation frequency. After performing state unlearning (Figure \ref{fig:HeatmapStateUnlearn}), in which the forgotten state corresponds to a door located at the bottom-left corner, the robot consistently avoids passing through that door. Similarly, after trajectory unlearning (Figure \ref{fig:HeatmapTrajUnlearn}), where the forgotten trajectory corresponds to the upper route, the robot refrains from using that path and instead discovers alternative routes. Finally, following environment unlearning (Figure \ref{fig:HeatmapEnvUnlearn}), the robot behaves randomly within the environment, as if exploring it for the first time. These results demonstrate that our method remains effective when deployed on realistic platforms.




\vspace{-1mm}
\section{Privacy Threats to Our Framework}\label{sec:robustness}

To evaluate the robustness of our unlearning framework, we design an unlearning inference adversary capable of interacting with the LLM-based agent through natural-language queries. The adversary issues task instructions, observes the agent’s behaviors, and attempts to infer the forgotten knowledge through behavioral analysis. A common attack against LLM-based agents is prompt injection \cite{Liu24USENIX}. However, prompt injection is not applicable in our setting, as it typically assumes a predefined target, such as steering the agent to select attacker-preferred tools \cite{Shi26NDSS}. In contrast, our adversary does not know which knowledge has been unlearned. 

Specifically, in state unlearning, the adversary aims to determine whether a given state $s$ has been unlearned. To do so, it constructs a set of tasks whose optimal solutions require visiting the suspected state. If the agent consistently selects suboptimal actions to avoid this state, the adversary gains strong confidence that the suspected state has been unlearned. Similarly, in trajectory unlearning, the adversary seeks to infer whether a given trajectory $\tau$ has been unlearned. It constructs tasks whose optimal solutions involve executing the suspected trajectory. If the agent persistently avoids forming this trajectory and instead adopts suboptimal alternatives, the adversary can confidently conclude that the suspected trajectory was unlearned. However, if the unlearned agent’s behavior closely resembles that of an agent that has never encountered the unlearned contents, even when the agent avoids them, the adversary cannot distinguish whether the contents were truly forgotten or never learned. 

In environment unlearning, the adversary’s objective shifts from identifying whether a specific environment has been unlearned to reconstructing the unlearned environment itself. This shift occurs because the adversary typically lacks access to the complete features and spatial layout of an environment, making binary inference infeasible. To achieve reconstruction, the adversary designs a set of exploratory tasks that prompt the agent to interact with different regions of the environment, and then applies a reconstruction method \cite{Pan19AAMAS} to infer its spatial layout based on the observed state–action trajectories.

The experiment is conducted in the GridWorld environment, chosen for its interpretability in observing agent behaviors. For implementation, in both state and trajectory unlearning scenarios, we define 10 pairs of starting and ending points located near the suspected state, encompassing the suspected trajectory, or traversing the target environment. Each pair of tasks is executed ten times. We then observe whether the agent visits the target state or trajectory during task execution, or, in the case of environment unlearning, reconstruct the target environment based on the agent’s behavioral patterns. In parallel, we construct a baseline agent that has never seen the unlearning contents by ensuring these contents are excluded from all prompt instructions during training and execution.

The results are presented in Tables~\ref{tab:StateGridWorldAttack},~\ref{tab:TrajGridWorldAttack}, and~\ref{tab:EnviGridWorldAttack}. As shown in Tables~\ref{tab:StateGridWorldAttack} and~\ref{tab:TrajGridWorldAttack}, when using our NL-based unlearning method, the probability that an unlearned agent traverses a forgotten state or trajectory is nearly identical to that of an agent that has never seen the forgotten state or trajectory. In contrast, the Code-based method performs slightly worse, as the unlearned agent exhibits a marginally higher probability of revisiting the forgotten state or trajectory. The Example-based method fails to achieve this effect due to its limited unlearning capability. 

\begin{table}[!ht]\scriptsize
\vspace{-1mm}
	\centering
	\caption{Inference Attack on State Unlearning in GridWorld.}
\begin{tabular} {cc} 
\toprule
  & \makecell[c]{Traverse the unlearned state\\(Pre-unlearn/Unlearn/Unseen)}  \\
 \midrule 
NL (GPT) & $0.99/0.02/0.01$  \\ 
NL (Claude) & $0.97/0.02/0.01$  \\ 
NL (Llama) & $0.98/0.01/0.02$  \\\hline 
Code (GPT) & $0.98/0.05/0.01$  \\ 
Code (Claude) & $0.98/0.08/0.02$  \\ 
Code (Llama) & $0.96/0.06/0.02$  \\\hline 
Exam. (GPT) & $0.95/0.24/0.01$   \\
Exam. (Claude) & $0.96/0.19/0.01$  \\ 
Exam. (Llama) & $0.93/0.18/0.02$   \\ 
\bottomrule
\end{tabular}
	\label{tab:StateGridWorldAttack}
 \vspace{-0mm}
\end{table}

\begin{table}[!ht]\scriptsize
\vspace{-2mm}
	\centering
	\caption{Inference Attack on Trajectory Unlearning in GridWorld.}
\begin{tabular} {cc} 
\toprule
  & \makecell[c]{Traverse the unlearned trajectory\\(Pre-unlearn/Unlearn/Unseen)}  \\ 
 \midrule
NL (GPT) & $0.95/0/0$  \\ 
NL (Claude) & $0.95/0.01/0.01$  \\ 
NL (Llama) & $0.94/0.03/0.02$  \\\hline 
Code (GPT) & $0.95/0.05/0$  \\ 
Code (Claude) & $0.94/0.07/0.02$  \\ 
Code (Llama) & $0.94/0.04/0.01$  \\\hline 
Exam. (GPT) & $0.96/0.27/0.01$   \\
Exam. (Claude) & $0.94/0.22/0.01$  \\ 
Exam. (Llama) & $0.95/0.23/0.02$   \\ 
\bottomrule
\end{tabular}
	\label{tab:TrajGridWorldAttack}
 \vspace{-2mm}
\end{table}

\begin{table}[!ht]\scriptsize
    \vspace{-1mm}
	\centering
	\caption{Reconstruction Attack on Environment Unlearning in GridWorld.}
\begin{tabular} {cc} 
\toprule
  & \makecell[c]{Successful reconstruction rate \\of the unlearned environment\\(Pre-unlearn/Unlearng/Unseen)}  \\ 
 \midrule
NL (GPT) & $0.96/0.18/0.15$  \\ 
NL (Claude) & $0.96/0.12/0.16$  \\ 
NL (Llama) & $0.93/0.15/0.13$  \\\hline 
Code (GPT) & $0.97/0.28/0.16$  \\ 
Code (Claude) & $0.95/0.33/0.14$  \\ 
Code (Llama) & $0.96/0.35/0.15$  \\\hline 
Exam. (GPT) & $0.98/0.43/0.17$   \\
Exam. (Claude) & $0.93/0.45/0.15$  \\ 
Exam. (Llama) & $0.92/0.40/0.16$   \\ 
\bottomrule
\end{tabular}
	\label{tab:EnviGridWorldAttack}
 \vspace{-0mm}
\end{table}

Similarly, Table~\ref{tab:EnviGridWorldAttack} shows that after applying our NL-based method for environment unlearning, the proportion of the unlearned environment that can be successfully reconstructed drops sharply to around $15\%$. This value is significantly lower than the pre-unlearning result (approximately $96\%$) and closely matches that of an agent that has never encountered the environment (about $14\%$). In contrast, the Code-based and Example-based methods yield considerably higher reconstruction proportions of around $33\%$ and $43\%$, respectively. 
Figure~\ref{fig:Reconstruction} illustrates an example of environment reconstruction results. Before unlearning, the reconstructed environment (Figure~\ref{fig:Reconstruction}(b)) is highly similar to the target unlearning environment (Figure~\ref{fig:Reconstruction}(a)), preserving most of its structural features. After applying our unlearning method with the GPT model (Figure~\ref{fig:Reconstruction}(c)), the similarity between the reconstructed and target environments decreases significantly, revealing minimal information about the original environment. The reconstructed environment derived from an agent that has never encountered the target environment (Figure~\ref{fig:Reconstruction}(d)) shows an analogously low level of similarity, confirming the effectiveness of our approach.
These findings demonstrate that our NL-based method effectively reshapes the agent’s action distribution to align with that of an agent trained without the forgotten data, making it hard for an adversary to infer the forgotten information through behavioral observation. 

\begin{figure}[ht]
\vspace{-0mm}
\centering
    \includegraphics[scale=0.4]{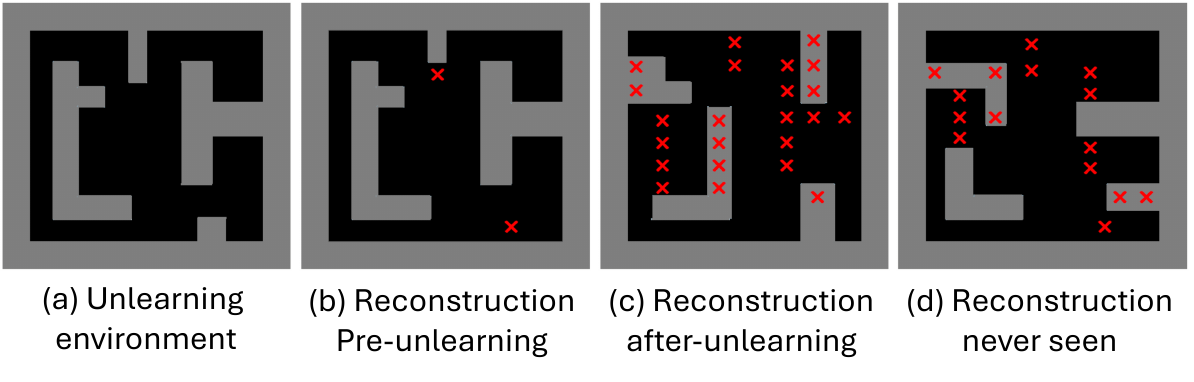}
    \vspace{-1mm}
	\caption{Environment reconstruction results. (a) The target environment to be unlearned. (b) The reconstructed environment before unlearning. (c) The reconstructed environment after unlearning using our method with GPT. (d) The reconstructed environment based on the behavior of an agent that has never seen the environment.}
    \vspace{-2mm}
	\label{fig:Reconstruction}
\end{figure}

\vspace{-1mm}
\section{Related Work}
\vspace{-0mm}
\noindent\textbf{LLM-based Agents.} Current LLM-based agents primarily focus on planning \cite{Sun23NIPS,Hu25KDD}, training \cite{Shinn23NIPS,Zheng24ICLR}, and reasoning and acting \cite{Yao23ICLR}. 
Sun et al. \cite{Sun23NIPS} introduce AdaPlanner, an LLM-based planner that refines agent plans based on feedback using both in-plan and out-of-plan refinement strategies, relying solely on prompting.
Zheng et al. \cite{Zheng24ICLR} propose SYNAPSE, an LLM-based agent controller that integrates three core components: state abstraction, which filters out task-irrelevant information from raw computer states; trajectory-as-exemplar prompting, which uses complete trajectories of abstracted states and actions to guide the LLM; and exemplar memory, which stores embeddings of successful trajectories for future retrieval.
Yao et al. \cite{Yao23ICLR} develop ReAct, a framework that enables LLMs to interleave reasoning (``thoughts'') and acting (``actions''), expanding the agent's action space by incorporating internal reasoning into decision-making.

\noindent\textbf{LLM Unlearning.} LLM unlearning methods can be broadly classified into weight-based and input-based approaches \cite{Liu24NatureMI}. Weight-based methods achieve unlearning by modifying the LLM's weights and/or architectural components \cite{Chen23EMNLP,Yao24ICLR}. In contrast, input-based methods rely on carefully designed input instructions, such as prompts or in-context examples, to guide the LLM toward the unlearning objective without requiring parameter updates \cite{Pawelczyk24ICML,Liu24NIPS,Sanyal25}. Yao et al. \cite{Yao24ICLR} propose an LLM unlearning method based on gradient ascent, which increases the gradient on the unlearning dataset, and random labeling, where responses are randomly mismatched with data in the unlearning dataset to disrupt learned associations. This line of research has been extended to address unlearning in pre-trained LLMs \cite{Yao24ACL}, continual unlearning over time \cite{Gao25ICLR}, and enhancing model safety through unlearning \cite{Liu24ACL}.
On the input-based side, Liu et al. \cite{Liu24NIPS} introduce an approach using embedding-corrupted prompts to address knowledge entanglement during unlearning. By modifying the embeddings of prompts, their method aligns the LLM's output with that of an LLM that has never seen the unlearning data, mitigating residual knowledge retention. This area has further evolved to explore evaluation metrics for quantifying forgetting \cite{Yuan25ICLR} and the dynamics between unlearning and re-learning \cite{Hu25ICLR}.


\vspace{-1mm}
\section{Conclusion} 
\vspace{-0mm}
This paper introduces a pioneering research area, termed LLM-based agent unlearning, which addresses critical concerns related to privacy and security. We proposed a novel unlearning framework that categorizes unlearning scenarios into three contexts. To realize this framework, we design an approach that trains a conversion model to translate high-level unlearning requests into actionable unlearning prompts, enabling precise forgetting within LLM-based agents. Future work will extend this framework to more interactive real-world scenarios, such as multi-agent systems. 

\bibliographystyle{IEEEtran}
\bibliography{references}

\section*{Appendix}
\setcounter{section}{0}
\renewcommand{\appendixname}{Appendix~\Alph{section}}

\section{Examples of Unlearning Requests and Prompts}
Table \ref{tab:Example} presents examples of unlearning requests and the corresponding unlearning prompts generated by the conversion model $\mathcal{C}$ in AlfWorld.

\begin{table}[h]\scriptsize
	\centering
    \vspace{-1mm}
 	\caption{Examples of Unlearning Requests and the Corresponding Unlearning Prompts in AlfWorld}
\begin{tabular} {p{1.1cm}|p{6.4cm}}
\toprule
\rowcolor{gray!20}
\multicolumn{2}{c}{\textbf{State Unlearning}}\\\cline{1-2}
\textbf{Unlearning Request} & \textit{You are an expert in designing prompts for household AI agents in ALFWorld environment. Your task is to put a cool mug in cabinet.... How should we modify the agent's prompt to make it completely avoid accessing `cabinet 6' during task execution? Please provide your response starting with ``Unlearning Instruction:'' followed by a detailed explanation.}\\\hline
\textbf{Unlearning Prompt} & \textit{Unlearning Instruction: Remove `cabinet 6' from the list of receptacles in the prompt. ..., the AI agent will no longer consider placing the cool mug in that specific cabinet during task execution. This modification ensures that the agent completely avoids accessing `cabinet 6' as it carries out the prompt, ...}\\\hline
\rowcolor{gray!20}
\multicolumn{2}{c}{\textbf{Trajectory Unlearning}}\\\cline{1-2}
\textbf{Unlearning Request} & \textit{...Task: Your task is to put a cool mug in cabinet. Available Options: [`cabinet 6', `cabinet 5', ..., `stoveburner 1', `toaster 1']. Goal: Avoid using that route by reversing the agent's search efficiency. Please provide your response starting with ``Unlearning Instruction:'' followed by a strategy description.}\\\hline
\textbf{Unlearning Prompt} & \textit{Unlearning Instruction: To reverse the agent's search efficiency, instruct the agent to prioritize further options first, rather than nearest ones. Specifically, the agent should begin its search with the farthest options (like 'garbagecan 1' or 'fridge 1') before exploring the more relevant cabinets...}\\\hline
\rowcolor{gray!20}
\multicolumn{2}{c}{\textbf{Environment Unlearning}}\\\cline{1-2}
\textbf{Unlearning Request} & \textit{You are designing environment-aware AI agents. Task: Your task is to: examine the alarmclock with the desklamp. Current Environment: [`bed 1', `desk 2', ..., `drawer 2', `drawer 1']. Question: How should the agent forget this specific environment? Please provide your response starting with ``Unlearning Instruction:'' followed by the forget strategy.}\\\hline
\textbf{Unlearning Prompt} & \textit{Unlearning Instruction: The agent should identify the specific environment pattern by checking for the presence of `bed 1', `desk 2', ..., `drawer 2', and `drawer 1' in the current environment. If these elements are present, the agent should skip the processing of the alarm clock and desk lamp, as this indicates the environment is the one to be forgotten...}\\\hline
\end{tabular}
\vspace{-0mm}
	\label{tab:Example}
\end{table}

\section{Proof of Key Conclusions}
\begin{proof}[Proof of Lemma \ref{lem:smoothness}]
By definition, $\mathcal{L}_{\mathcal{C}}$ is $L$-smooth if, for all $\phi_1, \phi_2$,
    \begin{equation}\nonumber
        ||\nabla_{\phi_1}\mathcal{L}_{\mathcal{C}}-\nabla_{\phi_2}\mathcal{L}_{\mathcal{C}}||\leq L||\phi_1-\phi_2||.
    \end{equation}
This holds if the operator norm of the Hessian satisfies
    \begin{equation}\nonumber
        ||\nabla^2_\phi\mathcal{L}_{\mathcal{C}}||_{\mathrm{op}}\leq L.
    \end{equation}
Using Eq. \ref{eq:second-order}, we obtain
    \begin{equation}\nonumber
        ||\nabla^2_\phi\mathcal{L}_{\mathcal{C}}||_{\mathrm{op}}=\beta^2\sigma(z)\sigma(-z)||\Delta_\psi||^2\leq\frac{\beta^2}{4}||\Delta_\psi||^2\leq\frac{\beta^2B^2}{4},
    \end{equation}
since $\sigma(z)\sigma(-z)\in(0,\frac{1}{4}]$. In particular, the boundedness assumption $|\Delta\psi|\leq B$ is mild and reasonable, as feature representations in practical models are typically normalized or regularized to maintain numerical stability.
\end{proof}

\begin{proof}[Proof of Lemma \ref{lem:convexity}]
A function $\mathcal{L}_{\mathcal{C}}$ is $\alpha$-strongly convex if, for all $\phi_1,\phi_2$,
    \begin{equation}\nonumber
        \mathcal{L}_{\mathcal{C}}(\phi_2)\geq\mathcal{L}_{\mathcal{C}}(\phi_1)+\nabla_{\phi_1}\mathcal{L}_{\mathcal{C}}^\top(\phi_2-\phi_1)+\frac{\alpha}{2}||\phi_2-\phi_1||^2.
    \end{equation}
which holds if and only if $\nabla^2_\phi\mathcal{L}_{\mathcal{C}}\succeq\alpha I$.
Substituting Eq. \ref{eq:second-order}, we have
    \begin{equation}\nonumber
        \nabla^2_\phi\mathcal{L}_{\mathcal{C}}=\beta^2\sigma(z)\sigma(-z)\Delta\psi(\Delta\psi)^\top\succeq\beta^2\epsilon\mu I,
    \end{equation}
thus $\alpha=\beta^2\epsilon\mu$. The assumption $\Delta\psi(\Delta\psi)^\top\succeq\mu I$ ensures sufficient feature diversity, which is typical in stochastic optimization and guarantees a well-conditioned curvature near the optimum.
\end{proof}

\begin{proof}[Proof of Theorem \ref{thm:convergence}]
    Gradient descent updates parameters as
    \begin{equation}\label{eq:SGD}
        \phi_{t+1}-\phi_t=-\eta\nabla_{\phi_t}\mathcal{L}_{\mathcal{C}}.
    \end{equation}
    Since $\mathcal{L}_{\mathcal{C}}$ is both $L$-smooth and $\alpha$-strongly convex, the descent lemma \cite{Bauschke17} gives
    \begin{equation}\label{eq:DescentLemma}
        \mathcal{L}_{\mathcal{C}}(\phi_{t+1})\leq\mathcal{L}_{\mathcal{C}}(\phi_{t})+\nabla_{\phi_t}\mathcal{L}_{\mathcal{C}}^\top(\phi_{t+1}-\phi_t)+\frac{L}{2}||\phi_{t+1}-\phi_t||^2.
    \end{equation}
Plugging Eq. \ref{eq:SGD} into (\ref{eq:DescentLemma}) yields: 
\begin{equation}\label{ieq:smooth}
\begin{aligned}
    \mathcal{L}_{\mathcal{C}}(\phi_{t+1})&\leq\mathcal{L}_{\mathcal{C}}(\phi_{t})-\eta||\nabla_{\phi_{t}}\mathcal{L}_{\mathcal{C}}||^2+\frac{L\eta^2}{2}||\nabla_{\phi_{t}}\mathcal{L}_{\mathcal{C}}||^2\\
    &=\mathcal{L}_{\mathcal{C}}(\phi_{t})-\eta(1-\frac{L\eta}{2})||\nabla_{\phi_{t}}\mathcal{L}_{\mathcal{C}}||^2.
\end{aligned}
\end{equation}

On the other hand, $\alpha$-strong convexity implies
\begin{equation}\label{ieq:convex}
\begin{aligned}
    &\mathcal{L}_{\mathcal{C}}(\phi^*)\geq\mathcal{L}_{\mathcal{C}}(\phi_t)+\nabla_{\phi_t}\mathcal{L}_{\mathcal{C}}^\top(\phi^*-\phi_t)+\frac{\alpha}{2}||\phi^*-\phi_t||^2\\
    \Rightarrow&\mathcal{L}_{\mathcal{C}}(\phi_t)-\mathcal{L}_{\mathcal{C}}(\phi^*)\leq\nabla_{\phi_t}\mathcal{L}_{\mathcal{C}}^\top(\phi_t-\phi^*)-\frac{\alpha}{2}||\phi_t-\phi^*||^2\\
    &\leq||\nabla_{\phi_t}\mathcal{L}_{\mathcal{C}}||\cdot||\phi_t-\phi^*||-\frac{\alpha}{2}||\phi_t-\phi^*||^2\\
    &\leq\frac{||\nabla_{\phi_t}\mathcal{L}_{\mathcal{C}}||^2}{\alpha}-\frac{||\nabla_{\phi_t}\mathcal{L}_{\mathcal{C}}||^2}{2\alpha}
    =\frac{||\nabla_{\phi_t}\mathcal{L}_{\mathcal{C}}||^2}{2\alpha}\\
    \Rightarrow&||\nabla_{\phi_t}\mathcal{L}_{\mathcal{C}}||^2\geq 2\alpha(\mathcal{L}_{\mathcal{C}}(\phi_t)-\mathcal{L}_{\mathcal{C}}(\phi^*)).
\end{aligned}
\end{equation}
The third inequality follows the Cauchy–Schwarz inequality, while the fourth arises from the fact that the concave quadratic function achieves its maximum at $||\phi_t-\phi^*||=\frac{||\nabla_{\phi_t}\mathcal{L}_{\mathcal{C}}||}{\alpha}$.
Finally, plugging (\ref{ieq:convex}) into (\ref{ieq:smooth}) achieves:
\begin{equation}\label{eq:conclusion}
\begin{aligned}
    &\mathcal{L}_{\mathcal{C}}(\phi_{t+1})
      \leq \mathcal{L}_{\mathcal{C}}(\phi_{t})
      -2\eta\alpha(1-\frac{L\eta}{2})
      (\mathcal{L}_{\mathcal{C}}(\phi_t)-\mathcal{L}_{\mathcal{C}}(\phi^*))\\[4pt]
    \Rightarrow&\mathcal{L}_{\mathcal{C}}(\phi_{t+1})-\mathcal{L}_{\mathcal{C}}(\phi^*)
      \leq
      (\mathcal{L}_{\mathcal{C}}(\phi_{t})-\mathcal{L}_{\mathcal{C}}(\phi^*))
      \\[-1pt]
      &\hspace{6em}
      -2\eta\alpha(1-\frac{L\eta}{2})
      (\mathcal{L}_{\mathcal{C}}(\phi_t)-\mathcal{L}_{\mathcal{C}}(\phi^*))\\[4pt]
    &
      \leq
      [1-2\eta\alpha(1-\frac{L\eta}{2})]
      (\mathcal{L}_{\mathcal{C}}(\phi_t)-\mathcal{L}_{\mathcal{C}}(\phi^*)).
\end{aligned}
\end{equation}

By choosing the step size $\eta\in(0,\frac{2}{L+\alpha}]$, the inequality in (\ref{eq:conclusion}) can be further simplified to:
\begin{equation}\nonumber
    \mathcal{L}_{\mathcal{C}}(\phi_{t+1})-\mathcal{L}_{\mathcal{C}}(\phi^*)\leq(1-\eta\alpha)(\mathcal{L}_{\mathcal{C}}(\phi_t)-\mathcal{L}_{\mathcal{C}}(\phi^*)),
\end{equation}
which concludes the proof.
\end{proof}

\begin{proof}[Proof of Theorem \ref{thm:KLBound}]
    When instantiating $Q(x,y)$ as $\beta r^*(x,y)$ in (\ref{eq:prop}), we obtain
\begin{equation}\label{eq:propOptimal}
    \pi^*_\phi(y|x)\propto\pi_{\mathrm{base}}(y|x)e^{\beta r^*(x,y)}.
\end{equation}
Similarly, its empirical counterpart can be expressed as
\begin{equation}\label{eq:propEmpirical}
    \pi_\phi(y|x)\propto\pi_{\mathrm{base}}(y|x)e^{\beta r(x,y)}.
\end{equation}
Dividing (\ref{eq:propEmpirical}) by (\ref{eq:propOptimal}) and taking the logarithm on both sides yields
\begin{equation}
\begin{aligned}
    \mathrm{log}\frac{\pi_\phi(y|x)}{\pi^*_\phi(y|x)}&=\beta(r(x,y)-r^*(x,y))\\
    \mathbb{E}_{y\sim\pi_\phi}[\mathrm{log}\frac{\pi_\phi(y|x)}{\pi^*_\phi(y|x)}]&=\beta\mathbb{E}_{y\sim\pi_\phi}[r(x,y)-r^*(x,y)]\\
    d_{\mathrm{KL}}(\pi_\phi(\cdot|x)\parallel \pi^*_\phi(\cdot|x))&\leq\frac{1}{\beta}\mathbb{E}_{y\sim\pi_\phi}[r^*(x,y)-r(x,y)],
\end{aligned}
\end{equation}
which indicates that the KL divergence between $\pi_\phi$ and $\pi^*_\phi$ is upper bounded by the expected reward difference between the optimal and empirical cases. For simplicity, let $\epsilon=\frac{1}{\beta}\mathbb{E}_{y\sim\pi_\phi}[r^*(x,y)-r(x,y)]$. We thus have
\begin{equation}\nonumber
    d_{\mathrm{KL}}(\pi_\phi(\cdot|x)\parallel \pi^*_\phi(\cdot|x))\leq\epsilon.
\end{equation}

Next, assume that the LLM-based agent policy is Lipschitz-continuous with respect to its prompt embedding, a common assumption in transformer-based models. Under this assumption, we have
\begin{equation}\label{eq:KLLip}
\begin{aligned}
    &d_{\mathrm{KL}}(\pi_\theta(\cdot|s_t;\pi_\phi(x))\parallel\pi_\theta(\cdot|s_t;\pi^*_\phi(x)))\\
    &\leq Ld_{\mathrm{KL}}(\pi_\phi(\cdot|x)\parallel \pi^*_\phi(\cdot|x)),
\end{aligned}
\end{equation}
where $L$ is the Lipschitz constant. Using the chain rule of KL divergence for sequential decision processes, we can write
\begin{equation}\label{eq:KLPunPref}
    d_{\mathrm{KL}}(P_{\mathrm{un}}\parallel P_{\mathrm{ref}})=\mathbb{E}_{s_t}[\sum_t d_{\mathrm{KL}}(\pi_\theta(\cdot|s_t;\pi_\phi(x_t))\parallel\pi_{\mathrm{ref}}(\cdot|s_t))].
\end{equation}
Since $\pi_\phi$ is trained to produce unlearning prompts that guide the LLM-based agent’s behavior, its optimal version $\pi^*_\phi$ can be regarded as generating prompts that perfectly achieve the desired unlearning objectives. Consequently, we can assume
\begin{equation}\nonumber
    \pi_\theta(a|s;\pi^*_\phi(x))\approx \pi_{\mathrm{ref}}(a|s).
\end{equation}
Substituting $\pi_{\mathrm{ref}}(\cdot|s_t)$ with $\pi_\theta(\cdot|s_t; \pi^*_\phi(x_t))$ in (\ref{eq:KLPunPref}) and applying (\ref{eq:KLLip}), we finally obtain 
\begin{equation}\nonumber
    d_{\mathrm{KL}}(P_{\mathrm{un}}\parallel P_{\mathrm{ref}})\leq L\mathbb{E}_{x}[d_{\mathrm{KL}}(\pi_\phi(\cdot|x)\parallel \pi^*_\phi(\cdot|x))]\leq L\epsilon.
\end{equation}
\end{proof}


\section{Additional Experimental Results}
\subsection{Scalability Study}


\noindent\textbf{Varying Numbers of Unlearning States.}
The results are shown in Tables \ref{tab:MultiStatesGridWorld}, \ref{tab:TwoStatesAlfWorld}, \ref{tab:MultiQuestionsHotPotQA}, and \ref{tab:MultiProblemsHumanEval} for the GridWorld, AlfWorld, HotPotQA, and HumanEval platforms, respectively. 

\begin{table}[!ht]\scriptsize
\vspace{-1mm}
	\centering
	\caption{Unlearning five/ten states in GridWorld.}
 \vspace{-0mm}
\begin{tabular} {ccccc} 
\toprule
  & \makecell[c]{Unlearn\\efficacy} & \makecell[c]{Steps\\Before\\unlearn} & \makecell[c]{Steps\\After unlearn\\(Target envir.)} & \makecell[c]{Steps\\After unlearn\\(Other envir.)} \\
 \midrule
NL (GPT) & $0.98$/$0.98$ & $20.1$ & $25.4$/$28.8$ & $20.0$/$20.2$ \\ 
NL (Claude) & $0.98$/$0.97$ & $21.3$ & $24.5$/$28.5$ & $21.4$/$21.3$ \\
NL (Llama) & $0.99$/$0.97$ & $19.8$ & $24.7$/$28.9$ & $19.9$/$20.0$ \\\hline
Code (GPT) & $0.97$/$0.96$ & $19.9$ & $25.6$/$29.7$ & $20.0$/$20.1$ \\
Code (Claude) & $0.99$/$0.97$ & $20.1$ & $25.5$/$29.4$ & $20.0$/$20.2$ \\
Code (Llama) & $0.99$/$0.97$ & $19.9$ & $25.2$/$29.6$ & $19.9$/$20.1$ \\\hline
Exam. (GPT) & $0.74$/$0.72$ & $20.2$ & $21.5$/$22.1$ & $20.3$/$20.3$ \\
Exam. (Claude) & $0.82$/$0.81$ & $19.9$ & $21.2$/$22.4$ & $19.9$/$20.0$ \\
Exam. (Llama) & $0.81$/$0.80$ & $20.1$ & $22.5$/$22.6$ & $20.1$/$20.2$ \\
\bottomrule
\end{tabular}
	\label{tab:MultiStatesGridWorld}
 \vspace{-1mm}
\end{table}


\begin{table}[!ht]\scriptsize
\vspace{-0mm}
	\centering
	\caption{Unlearning two states in AlfWorld.}
 \vspace{-0mm}
\begin{tabular} {ccccc} 
\toprule
  & \makecell[c]{Unlearn\\efficacy} & Unlearn@1 & \makecell[c]{Success rate\\Before unlearn} & \makecell[c]{Success rate\\After unlearn} \\
 \midrule
NL (GPT) & $1$ & $0.98$ & $0.97$ & $0.94$ \\ 
NL (Claude) & $1$ & $0.88$ & $0.86$ & $0.83$ \\
NL (Llama) & $1$ & $0.96$ & $0.95$ & $0.92$ \\\hline
Code (GPT) & $1$ & $0.96$ & $0.97$ & $0.82$ \\
Code (Claude) & $1$ & $0.82$ & $0.88$ & $0.78$ \\
Code (Llama) & $1$ & $0.88$ & $0.88$ & $0.60$ \\\hline
Exam. (GPT) & $1$ & $0.93$ & $0.20$ & $0.13$ \\
Exam. (Claude) & $1$ & $0.79$ & $0.07$ & $0.08$ \\
Exam. (Llama) & $1$ & $0.97$ & $0.40$ & $0.44$ \\
\bottomrule
\end{tabular}
	\label{tab:TwoStatesAlfWorld}
 \vspace{-1mm}
\end{table}

\begin{table}[!ht]\scriptsize
\vspace{-0mm}
	\centering
	\caption{Unlearning two/three questions in HotPotQA.}
 \vspace{-0mm}
\begin{tabular} {ccccc} 
\toprule
  & \makecell[c]{Unlearn\\efficacy} & Unlearn@1 & \makecell[c]{Success rate\\Before unlearn} & \makecell[c]{Success rate\\After unlearn} \\
 \midrule
NL (GPT) & $1$/$1$ & $1$/$1$ & $0.85$ & $0.80$/$0.82$ \\ 
NL (Claude) & $1$/$1$ & $0.5$/$0$ & $0.79$ & $0.76$/$0.74$ \\
NL (Llama) & $1$/$1$ & $1$/$1$ & $0.89$ & $0.85$/$0.84$ \\\hline
Code (GPT) & $1$/$1$ & $1$/$1$ & $0.81$ & $0.81$/$0.81$ \\
Code (Claude) & $1$/$1$ & $0.5$/$0$ & $0.76$ & $0.74$/$0.76$ \\
Code (Llama) & $1$/$1$ & $1$/$1$ & $0.83$ & $0.84$/$0.83$ \\\hline
Exam. (GPT) & $1$/$1$ & $1$/$1$ & $0.80$ & $0.81$/$0.80$ \\
Exam. (Claude) & $1$/$1$ & $0.5$/$0$ & $0.75$ & $0.75$/$0.75$ \\
Exam. (Llama) & $1$/$1$ & $1$/$1$ & $0.85$ & $0.86$/$0.86$ \\
\bottomrule
\end{tabular}
	\label{tab:MultiQuestionsHotPotQA}
 \vspace{-1mm}
\end{table}

\begin{table}[!ht]\scriptsize
\vspace{-0mm}
	\centering
	\caption{Unlearning two/three problems in HumanEval.}
 \vspace{-0mm}
\begin{tabular} {ccccc} 
\toprule
  & \makecell[c]{Unlearn\\efficacy} & Unlearn@1 & \makecell[c]{Success rate\\Before unlearn} & \makecell[c]{Success rate\\After unlearn} \\
 \midrule
NL (GPT) & $1$/$1$ & $1$/$1$ & $0.97$ & $0.95$/$0.93$ \\ 
NL (Claude) & $1$/$1$ & $0.5$/$0.33$ & $0.92$ & $0.88$/$0.90$ \\
NL (Llama) & $1$/$1$ & $0.5$/$0.33$ & $0.95$ & $0.89$/$0.89$ \\\hline
Code (GPT) & $1$/$1$ & $1$/$1$ & $0.94$ & $0.94$/$0.95$ \\
Code (Claude) & $1$/$1$ & $0.5$/$0.33$ & $0.89$ & $0.90$/$0.89$ \\
Code (Llama) & $1$/$1$ & $0.5$/$0.33$ & $0.90$ & $0.90$/$0.88$ \\\hline
Exam. (GPT) & $1$/$1$ & $1$/$1$ & $0.93$ & $0.95$/$0.94$ \\
Exam. (Claude) & $1$/$1$ & $0.5$/$0.33$ & $0.89$ & $0.90$/$0.90$ \\
Exam. (Llama) & $1$/$1$ & $0.5$/$0.33$ & $0.88$ & $0.88$/$0.90$ \\
\bottomrule
\end{tabular}
	\label{tab:MultiProblemsHumanEval}
 \vspace{-2mm}
\end{table}

The three unlearning methods demonstrate performance patterns similar to those observed in the scenario involving the unlearning of a single state, maintaining robust unlearning efficacy without severely impacting the execution of unrelated tasks.
However, an interesting observation arises in GridWorld (Table \ref{tab:MultiStatesGridWorld}). When employing the NL-based and Code-based methods, the average number of steps required to complete tasks in the target environment notably increases from approximately $20$ steps (before unlearning) to around $25$ and $29$ steps after unlearning $5$ and $10$ states, respectively. In contrast, the increase in the number of steps is substantially smaller for the Example-based method, reaching only about $21$ and $22$ steps after unlearning $5$ and $10$ states, respectively. 
This is because the NL-based and Code-based methods explicitly instruct the agent to strictly avoid the unlearned states, forcing the agent to select alternative, potentially suboptimal paths that lead to increased task execution steps. Conversely, the Example-based method implicitly guides the agent through provided examples rather than enforcing explicit constraints. Thus, the agent may retain partial access to previously learned knowledge, inadvertently utilizing parts of the original routes. 

\vspace{1mm}
\noindent\textbf{Varying Numbers of Unlearning Trajectories.}
This set of experiments applies exclusively to the GridWorld and AlfWorld settings, as the HotPotQA and HumanEval platforms do not incorporate the concept of trajectories. The results are presented in Tables \ref{tab:MultiTrajectoriesGridWorld} and \ref{tab:TwoTrajectoriesAlfWorld} for GridWorld and AlfWorld, respectively.
An interesting phenomenon emerges in the AlfWorld results (Table \ref{tab:TwoTrajectoriesAlfWorld}). Specifically, when using the Claude model, although the Unlearn@1 metric is relatively low, the task execution success rate after unlearning notably increases compared to before unlearning. 
This improvement can be attributed to Claude’s limited unlearning capability, meaning it is less aggressive and thus less prone to inadvertently forgetting useful task knowledge. Consequently, its conservative unlearning behavior effectively preserves important task execution information, resulting in an overall improvement in success rate after unlearning.

\begin{table}[!ht]\scriptsize
	\centering
	\caption{Unlearning two/three trajectories in GridWorld.}
\begin{tabular} {ccccc} 
\toprule
  & \makecell[c]{Unlearn\\efficacy} & \makecell[c]{Steps\\Before\\unlearn} & \makecell[c]{Steps\\After unlearn\\(Target envir.)} & \makecell[c]{Steps\\After unlearn\\(Other envir.)} \\
 \midrule
NL (GPT) & $0.98$/$0.97$ & $19.9$ & $22.8$/$23.1$ & $20.0$/$20.1$ \\ 
NL (Claude) & $0.98$/$0.96$ & $21.5$ & $22.8$/$23.2$ & $21.5$/$21.6$ \\
NL (Llama) & $0.98$/$0.97$ & $20.5$ & $22.3$/$23.1$ & $20.5$/$20.7$ \\\hline
Code (GPT) & $0.99$/$0.97$ & $20.2$ & $23.6$/$23.7$ & $20.2$/$20.1$ \\
Code (Claude) & $0.98$/$0.98$ & $20.9$ & $23.2$/$23.4$ & $21.0$/$21.0$ \\
Code (Llama) & $0.99$/$0.98$ & $20.3$ & $22.8$/$23.4$ & $20.4$/$20.3$ \\\hline
Exam. (GPT) & $0.81$/$0.80$ & $20.2$ & $21.8$/$21.8$ & $20.3$/$20.2$ \\
Exam. (Claude) & $0.81$/$0.79$ & $21.3$ & $22.7$/$22.7$ & $21.2$/$21.4$ \\
Exam. (Llama) & $0.83$/$0.83$ & $20.1$ & $21.9$/$22.2$ & $20.2$/$20.3$ \\
\bottomrule
\end{tabular}
	\label{tab:MultiTrajectoriesGridWorld}
\end{table}

\begin{table}[!ht]\scriptsize
	\centering
	\caption{Unlearning two trajectories in AlfWorld.}
\begin{tabular} {ccccc} 
\toprule
  & \makecell[c]{Unlearn\\efficacy} & Unlearn@1 & \makecell[c]{Success rate\\Before unlearn} & \makecell[c]{Success rate\\After unlearn} \\
 \midrule
NL (GPT) & $1$ & $0.98$ & $0.91$ & $0.88$ \\ 
NL (Claude) & $1$ & $0.81$ & $0.76$ & $0.82$ \\
NL (Llama) & $1$ & $1$ & $0.90$ & $0.87$ \\\hline
Code (GPT) & $1$ & $0.97$ & $0.90$ & $0.86$  \\
Code (Claude) & $1$ & $0.77$ & $0.74$ & $0.81$  \\
Code (Llama) & $1$ & $0.98$ & $0.88$ & $0.85$  \\\hline
Exam. (GPT) & $1$ & $0.58$ & $0.23$ & $0.19$ \\
Exam. (Claude) & $1$ & $0.35$ & $0.09$ & $0.14$ \\
Exam. (Llama) & $1$ & $0.52$ & $0.41$ & $0.40$ \\
\bottomrule
\end{tabular}
	\label{tab:TwoTrajectoriesAlfWorld}
\end{table}

\vspace{1mm}
\noindent\textbf{Varying Numbers of Unlearning Environments or Task Categories.}
The results are presented in Table \ref{tab:MultiEnvironmentGridWorld} (GridWorld), Table \ref{tab:TwoTasksAlfWorld} (unlearning two task categories in AlfWorld), Table \ref{tab:TwoEnvironmentsAlfWorld} (unlearning two environments in AlfWorld), Table \ref{tab:MultiCategoriesHotPotQA} (HotPotQA), and Table \ref{tab:MultiCategoriesHumanEval} (HumanEval).
An intriguing phenomenon emerges in the AlfWorld setting. As shown in Tables \ref{tab:TwoTasksAlfWorld} and \ref{tab:TwoEnvironmentsAlfWorld}, when the GPT model is instructed to unlearn two task categories, its success rate drops sharply after unlearning, indicating substantial performance degradation. A similar trend is observed when unlearning two environments, where both GPT and LLaMA exhibit notable reductions in success rates compared to their performance before unlearning.
These substantial reductions in performance can be attributed to the phenomenon of over-unlearning exhibited by GPT and Llama. Specifically, when asked to forget multiple categories or environments, these models tend to interpret the unlearning instructions too broadly. As a result, they overly restrict knowledge that extends beyond the intended targets. 

\begin{table}[!ht]\scriptsize
\vspace{-1mm}
	\centering
	\caption{Unlearning two/three environments in GridWorld.}
 \vspace{-0mm}
\begin{tabular} {ccccc} 
\toprule
  & \makecell[c]{Unlearn\\efficacy} & \makecell[c]{Steps\\Before\\unlearn} & \makecell[c]{Steps\\After unlearn\\(Target envir.)} & \makecell[c]{Steps\\After unlearn\\(Other envir.)} \\
 \midrule
NL (GPT) & $1$/$1$ & $20.5$ & $57.9$/$59.8$ & $21.2$/$21.2$ \\ 
NL (Claude) & $1$/$1$ & $21.7$ & $61.1$/$59.9$ & $22.2$/$21.1$ \\
NL (Llama) & $1$/$1$ & $20.7$ & $60.4$/$59.3$ & $22.3$/$21.7$ \\\hline
Code (GPT) & $1$/$1$ & $20.4$ & $59.7$/$61.2$ & $20.9$/$20.8$ \\
Code (Claude) & $1$/$1$ & $22.5$ & $58.7$/$60.5$ & $21.5$/$21.9$ \\
Code (Llama) & $1$/$1$ & $20.5$ & $59.1$/$58.6$ & $21.1$/$21.2$ \\\hline
Exam. (GPT) & $1$/$1$ & $20.4$ & $48.2$/$49.8$ & $21.2$/$22.1$ \\
Exam. (Claude) & $1$/$1$ & $21.6$ & $46.2$/$47.1$ & $21.4$/$22.8$ \\
Exam. (Llama) & $1$/$1$ & $20.6$ & $49.4$/$48.7$ & $21.3$/$20.9$ \\
\bottomrule
\end{tabular}
	\label{tab:MultiEnvironmentGridWorld}
 \vspace{-0mm}
\end{table}


\begin{table}[!ht]\scriptsize
\vspace{-0mm}
	\centering
	\caption{Unlearning two categories of tasks in AlfWorld.}
 \vspace{-0mm}
\begin{tabular} {ccccc} 
\toprule
  & \makecell[c]{Unlearn\\efficacy} & Unlearn@1 & \makecell[c]{Success rate\\Before unlearn} & \makecell[c]{Success rate\\After unlearn} \\
 \midrule
NL (GPT) & $1$ & $1$ & $0.85$ & $0.70$ \\ 
NL (Claude) & $1$ & $1$ & $0.73$ & $0.69$ \\
NL (Llama) & $1$ & $1$ & $0.95$ & $0.92$ \\\hline
Code (GPT) & $1$ & $1$ & $0.86$ & $0.62$ \\
Code (Claude) & $1$ & $0.75$ & $0.62$ & $0.68$ \\
Code (Llama) & $1$ & $1$ & $0.91$ & $0.90$ \\\hline
Exam. (GPT) & $1$ & $0.52$ & $0.33$ & $0.15$ \\
Exam. (Claude) & $1$ & $1$ & $0.13$ & $0.17$ \\
Exam. (Llama) & $1$ & $1$ & $0.38$ & $0.41$ \\
\bottomrule
\end{tabular}
	\label{tab:TwoTasksAlfWorld}
 \vspace{-0mm}
\end{table}

\begin{table}[!ht]\scriptsize
\vspace{-0mm}
	\centering
	\caption{Unlearning two environments in AlfWorld.}
 \vspace{-0mm}
\begin{tabular} {ccccc} 
\toprule
  & \makecell[c]{Unlearn\\efficacy} & Unlearn@1 & \makecell[c]{Success rate\\Before unlearn} & \makecell[c]{Success rate\\After unlearn} \\
 \midrule
NL (GPT) & $1$ & $1$ & $0.93$ & $0.84$ \\ 
NL (Claude) & $1$ & $0.98$ & $0.85$ & $0.82$\\
NL (Llama) & $1$ & $0.93$ & $0.91$ & $0.78$\\\hline
Code (GPT) & $1$ & $0.98$ & $0.95$ & $0.72$ \\
Code (Claude) & $1$ & $0.96$ & $0.89$ & $0.83$ \\
Code (Llama) & $1$ & $0.83$ & $0.88$ & $0.55$ \\\hline
Exam. (GPT) & $1$ & $0.94$ & $0.22$ & $0.11$ \\
Exam. (Claude) & $1$ & $0.92$ & $0.07$ & $0.06$ \\
Exam. (Llama) & $1$ & $0.85$ & $0.43$ & $0.34$ \\
\bottomrule
\end{tabular}
	\label{tab:TwoEnvironmentsAlfWorld}
 \vspace{-0mm}
\end{table}

\begin{table}[!ht]\scriptsize
\vspace{-0mm}
	\centering
	\caption{Unlearning two/three categories of questions in HotPotQA.}
 \vspace{-0mm}
\begin{tabular} {ccccc} 
\toprule
  & \makecell[c]{Unlearn\\efficacy} & Unlearn@1 & \makecell[c]{Success rate\\Before unlearn} & \makecell[c]{Success rate\\After unlearn} \\
 \midrule
NL (GPT) & $1$/$1$ & $1$/$1$ & $0.81$ & $0.82$/$0.79$ \\ 
NL (Claude) & $1$/$1$ & $0.65$/$0.69$ & $0.75$ & $0.80$/$0.79$ \\
NL (Llama) & $1$/$1$ & $0.88$/$0.91$ & $0.85$ & $0.96$/$0.80$ \\\hline
Code (GPT) & $1$/$1$ & $0.92$/$0.93$ & $0.81$ & $0.78$/$0.67$ \\
Code (Claude) & $1$/$1$ & $0.59$/$0.64$ & $0.75$ & $0.72$/$0.75$ \\
Code (Llama) & $1$/$1$ & $0.85$/$0.87$ & $0.85$ & $0.85$/$0.73$ \\\hline
Exam. (GPT) & $1$/$1$ & $0.95$/$0.96$ & $0.81$ & $0.77$/$0.77$ \\
Exam. (Claude) & $1$/$1$ & $0.71$/$0.66$ & $0.75$ & $0.76$/$0.76$ \\
Exam. (Llama) & $1$/$1$ & $0.87$/$0.88$ & $0.85$ & $0.86$/$0.82$ \\
\bottomrule
\end{tabular}
	\label{tab:MultiCategoriesHotPotQA}
 \vspace{-0mm}
\end{table}

\begin{table}[!ht]\scriptsize
\vspace{-0mm}
	\centering
	\caption{Unlearning two/three categories of problems in HumanEval.}
 \vspace{-0mm}
\begin{tabular} {ccccc} 
\toprule
  & \makecell[c]{Unlearn\\efficacy} & Unlearn@1 & \makecell[c]{Success rate\\Before unlearn} & \makecell[c]{Success rate\\After unlearn} \\
 \midrule
NL (GPT) & $1$/$1$ & $1$/$1$ & $0.94$ & $0.90$/$0.88$ \\ 
NL (Claude) & $1$/$1$ & $0.78$/$0.80$ & $0.89$ & $0.94$/$0.91$ \\
NL (Llama) & $1$/$1$ & $0.84$/$0.86$ & $0.89$ & $0.95$/$0.82$ \\\hline
Code (GPT) & $1$/$1$ & $1$/$1$ & $0.94$ & $0.94$/$0.91$ \\
Code (Claude) & $1$/$1$ & $0.76$/$0.77$ & $0.89$ & $0.94$/$0.97$ \\
Code (Llama) & $1$/$1$ & $0.81$/$0.83$ & $0.89$ & $0.92$/$0.94$ \\\hline
Exam. (GPT) & $1$/$1$ & $1$/$1$ & $0.94$ & $0.84$/$0.77$ \\
Exam. (Claude) & $1$/$1$ & $0.80$/$0.83$ & $0.89$ & $0.93$/$0.96$ \\
Exam. (Llama) & $1$/$1$ & $0.85$/$0.87$ & $0.89$ & $0.94$/$0.96$ \\
\bottomrule
\end{tabular}
	\label{tab:MultiCategoriesHumanEval}
 \vspace{-0mm}
\end{table}

\noindent\textbf{Summary.} The proposed NL-based method consistently outperforms the Code-based and Example-based methods in terms of unlearning efficacy when unlearning multiple items. However, after unlearning, the NL-based and Code-based methods typically require slightly more steps for task execution compared to the Example-based method in GridWorld. 
Further, although the Claude model exhibits comparatively lower unlearning performance than GPT and Llama, it effectively mitigates the risk of over-unlearning, particularly in scenarios involving multiple unlearning items.

\end{document}